\theoremstyle{plain}
\newtheorem{theorem}{Theorem}
\newtheorem{lemma}{Lemma}
\newtheorem{proposition}{Proposition}
\theoremstyle{definition}
\newtheorem{definition}{Definition}
\theoremstyle{remark}
  \tikzset{s/.style={circle,draw},t/.style={->,>=stealth}}
\title{Compositional Shielding and Reinforcement Learning for Multi-Agent Systems}
\author{Asger Horn Brorholt \orcidlink{0009-0007-1824-0554}}
\affiliation{
  \institution{Aalborg University}
  \city{Aalborg}
  \country{Denmark}}
\email{asgerhb@cs.aau.dk}
\author{Kim Guldstrand Larsen \orcidlink{0000-0002-5953-3384}}
\affiliation{
  \institution{Aalborg University}
  \city{Aalborg}
  \country{Denmark}}
\email{kgl@cs.aau.dk}
\author{Christian Schilling \orcidlink{0000-0003-3658-1065}}
\affiliation{
  \institution{Aalborg University}
  \city{Aalborg}
  \country{Denmark}}
\email{christianms@cs.aau.dk}
\begin{abstract}
Deep reinforcement learning has emerged as a powerful tool for obtaining high-performance policies. However, the safety of these policies has been a long-standing issue. One promising paradigm to guarantee safety is a \emph{shield}, which ``shields'' a policy from making unsafe actions. However, computing a shield scales exponentially in the number of state variables. This is a particular concern in multi-agent systems with many agents.
In this work, we propose a novel approach for multi-agent shielding. We address scalability by computing individual shields for each agent. The challenge is that typical safety specifications are global properties, but the shields of individual agents only ensure local properties. Our key to overcome this challenge is to apply assume-guarantee reasoning. Specifically, we present a sound proof rule that decomposes a (global, complex) safety specification into (local, simple) obligations for the shields of the individual agents.
Moreover, we show that applying the shields during reinforcement learning significantly improves the quality of the policies obtained for a given training budget. 
We demonstrate the effectiveness and scalability of our multi-agent shielding framework in two case studies, reducing the computation time from hours to seconds and achieving fast learning convergence.
\end{abstract}
\keywords{Multi-agent reinforcement learning; Shielding; Safety; Assume-guarantee reasoning}
\newcommand{\BibTeX}{\rm B\kern-.05em{\sc i\kern-.025em b}\kern-.08em\TeX}
\newcommand{\uppaalstratego}{\textsc{Uppaal Stratego}\xspace}
\newcommand{\states}{\ensuremath{\mathit{S}}\xspace}
\newcommand{\act}{\ensuremath{\mathit{Act}}\xspace}
\newcommand{\trans}{\ensuremath{\mathit{T}}\xspace}
\newcommand{\R}{\mathbb R}
\newcommand{\mdp}{\mathcal{M}}
\newcommand{\enabled}{\ensuremath{\mathcal{E}}\xspace}
\newcommand{\strategy}{\ensuremath{\sigma}\xspace}
\newcommand{\policy}{\ensuremath{\pi}\xspace}
\newcommand{\ensuremath{\text{\protect 
{\color{white} $\nabla$} 
\hspace{-2.5ex}
\tikz[baseline]{
    \draw [line width=0.12ex]
        (0, 1.2ex) -- 
        (1.2ex, 1.2ex) --
        (1.2ex, 0.4ex) --
        (0.6ex, -0.25ex) --
        (0, 0.4ex) --
        cycle
}
\hspace{-1.1ex}}}\xspace}{\ensuremath{\text{\protect 
{\color{white} $\nabla$} 
\hspace{-2.5ex}
\tikz[baseline]{
    \draw [line width=0.12ex]
        (0, 1.2ex) -- 
        (1.2ex, 1.2ex) --
        (1.2ex, 0.4ex) --
        (0.6ex, -0.25ex) --
        (0, 0.4ex) --
        cycle
}
\hspace{-1.1ex}}}\xspace}
\newcommand{\localshield}{\ensuremath{\ensuremath{\text{\protect }}\xspace}\xspace}
\newcommand{\ext}[1]{\ensuremath{{\uparrow}(#1)}\xspace}
\newcommand{\powerset}[1]{\ensuremath{2^{#1}}\xspace}
\newcommand{\lts}{\ensuremath{\mathcal{T}}\xspace}
\newcommand{\prj}{\ensuremath{\mathit{prj}}\xspace}
\newcommand{\cprj}{\ensuremath{\overline{\prj}}\xspace}
\newcommand{\sandbox}{\ensuremath{\mathit{sandbox}}\xspace}
\newcommand{\AG}[3]{\ensuremath{\langle #1 \rangle #2 \langle #3 \rangle}\xspace}
\newcommand{\comp}{\ensuremath{\sqcap}\xspace}
\newcommand{\prob}{\ensuremath{\mathit{Pr}}\xspace}
\definecolor{turquoise}{HTML}{1ABC9C}
\definecolor{emerald}{HTML}{2ECC71}
\definecolor{peterriver}{HTML}{3498DB}
\definecolor{amethyst}{HTML}{9B59B6}
\definecolor{wetasphalt}{HTML}{34495E}
\definecolor{greensea}{HTML}{16A085}
\definecolor{nephritis}{HTML}{27AE60}
\definecolor{belizehole}{HTML}{2980B9}
\definecolor{wisteria}{HTML}{8E44AD}
\definecolor{midnightblue}{HTML}{2C3E50}
\definecolor{sunflower}{HTML}{F1C40F}
\definecolor{carrot}{HTML}{E67E22}
\definecolor{alizarin}{HTML}{E74C3C}
\definecolor{clouds}{HTML}{ECF0F1}
\definecolor{concrete}{HTML}{95A5A6}
\definecolor{orange}{HTML}{F39C12}
\definecolor{pumpkin}{HTML}{D35400}
\definecolor{pomegranate}{HTML}{C0392B}
\definecolor{silver}{HTML}{BDC3C7}
\definecolor{asbestos}{HTML}{7F8C8D}
\newif\ifshowappendix
\newcommand{\refappendix}[1]{\ifshowappendix{ #1}\else{~\cite[appendix]{BLS24}}\fi}
\begin{document}

\pagestyle{fancy}
\fancyhead{}

\maketitle

\section{Introduction}

Reinforcement learning (RL)~\cite{DBLP:books/wi/Puterman94,DBLP:books/lib/SuttonB98}, and in particular deep RL, has demonstrated success in automatically learning high-performance policies for complex systems~\cite{DBLP:journals/nature/MnihKSRVBGRFOPB15,DBLP:journals/nature/BellemareCCGMMP20}.
However, learned policies lack guarantees, which prevents applications in safety-critical domains.

An attractive algorithmic paradigm to provably safe RL is \emph{shielding}~\cite{DBLP:conf/aaai/AlshiekhBEKNT18}.
In this paradigm, one constructs a \emph{shield}, which is a nondeterministic policy that only allows safe actions.
The shield acts as a guardrail for the RL agent to enforce safety both during learning (of a concrete policy) and operation.
This way, one obtains a \emph{safe-by-design shielded policy with high performance}.

\emph{Shield synthesis} automatically computes a shield from a safety specification and a model of the system, but scales exponentially in the number of state variables.
This is a particular concern in multi-agent (MA) systems, which typically consist of many variables.
Shielding of MA systems will be our focus in this work.

Existing approaches to MA shielding address scalability by computing individual shields for each agent.
Yet, these shields are either not truly safe or not truly independent; rather, they require online communication among all agents, which is often unrealistic.

In this paper, we present the first MA shielding approach that is truly compositional, does not require online communication, and provides absolute safety guarantees.
Concretely, we assume that agents observe a subset of all system variables (i.e., operate in a projection of the global state space).
We show how to tractably synthesize individual shields in low-dimensional projections.
The challenge we need to overcome is that a straightforward generalization of the classical shield synthesis to the MA setting for truly independent shields often fails.
The reason is that the projection removes the potential to coordinate between the agents, but often some form of coordination is required.

To address the need for coordination, we get inspiration from \emph{compositional reasoning}, which is a powerful approach, allowing to scale up the analysis of distributed systems.
The underlying principle is to construct a correctness proof of multi-component systems by smaller, ``local'' proofs for each individual component.
In particular, \emph{assume-guarantee reasoning} for concurrent programs was popularized in seminal works~\cite{OwickiG76,Lamport77,Pnueli84,Stark85,DBLP:journals/fteda/BenvenisteCNPRR18}.
By writing $\AG{A}{C}{G}$ for ``assuming~$A$, component~$C$ will guarantee~$G$,'' the standard (acyclic) assume-guarantee rule for finite state machines with handshake synchronization looks as follows~\cite{DBLP:reference/mc/GiannakopoulouNP18}:
\[
  \infer{\AG{\top}{C_1 \Vert C_2 \Vert \cdots \Vert C_n}{\phi}}{%
    \AG{\top}{C_1}{G_1},
    \AG{G_1}{C_2}{G_2},
    \dots,
    \AG{G_{n-2}}{C_{n-1}}{G_{n-1}},
    \AG{G_{n-1}}{C_n}{\phi}
  }
\]

By this chain of assume-guarantee pairs, it is clear that, together, the components ensure safety property~$\phi$.

In this work, we adapt the above rule to multi-agent shielding.
Instead of one shield for the whole system, we synthesize an individual shield for each agent, which together we call a \emph{distributed shield}.
Thus, we arrive at $n$ shield synthesis problems (corresponding to the rule's premise), but each of which is efficient.
In our case studies, this reduces the synthesis time from hours to seconds.
The \emph{guarantees~$G_i$ allow the individual shields to coordinate} on responsibilities at synthesis time.
Yet, distributed shields do not require communication when deployed.
Altogether, this allows us to \emph{synthesize safe shields} in a compositional  and scalable way.

The crucial challenge is that, in the classical setting, the components~$C_i$ are fixed.
In our synthesis setting, the components~$C_i$ are our agents, which are \emph{not} fixed at the time of the shield synthesis.
In this work, we assume that the guarantees~$G_i$ are given, which allows us to derive corresponding individual agent shields via standard shield synthesis.

\paragraph{Motivating Example}\label{sect:platoon}

A multi-agent car platoon with adaptive cruise controls consists of $n$ cars, numbered from back to front~\cite{DBLP:conf/birthday/LarsenMT15} (Figure~\ref{fig:platoon}).
The cars 1 to~$n-1$ are each controlled by an agent, while (front) car~$n$ is driven by the environment.
The state variables are the car velocities~$v_i$ and distances~$d_i$ between cars~$i$ and~$i+1$.
For $i < n$, car~$i$ follows car~$i+1$, observing the variables $(v_i, v_{i+1}, d_i)$.
With a decision period of 1~second, cars act by choosing an acceleration from
$\{-2, 0, 2\}$ [m/s$^2$].
Velocities are capped between $\interval{-10}{20}$\,m/s.

\begin{figure}[t]
  \centering
   \includegraphics[width=0.85\linewidth]{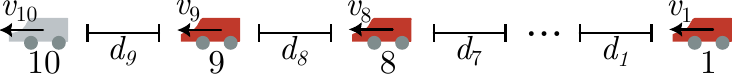}
  \caption{Car platoon example for $n = 10$ cars.}
  \Description{Cars driving in a row. A gray car labeled as number 10 is followed by red cars labeled 9, 8, down to 1. Car number 10 is shown to have velocity v10, and a distance d9 between itself and car 9. The other cars are similarly labeled.}
  \label{fig:platoon}
\end{figure}

When two cars have distance~0, they enter an uncontrollable ``damaged'' state where both cars get to a standstill.
The global safety property is to maintain a safe but bounded distance between all cars, i.e., the set of safe states is $\phi=\{ s \mid \bigwedge_i 0 < d_i < 200 \}$.

As a first attempt, we design the agents' individual safety properties to only maintain a safe distance to the car in front, i.e., $\phi_i=\{ s \mid 0 < d_i < 200 \}$ for all~$i$.
However, safe agent shields for cars~$i > 1$ do not exist for this property: car~$i$ cannot prevent a crash with car~$i-1$ (behind), and in the ``damaged'' (halting) state car~$i$ cannot guarantee to avoid crashing with car~$i+1$.  Note that making the distance $d_{i-1}$  observable for car~$i$ does not help.

To overcome this seemingly impossible situation, we will allow car~$i$ to \emph{assume} that the (unobservable) car~$i-1$ \emph{guarantees} to never crash into car~$i$.
This guarantee will be provided by the shield of car~$i-1$ and eliminates the critical behavior preventing a local shield for car~$i$.
In that way, we iteratively obtain local shields for all agents.
Note that this coincides with human driver reasoning.

\smallskip

Beside synthesis of a distributed shield, we also study learning policies for shielded agents.
In general, multi-agent reinforcement learning (MARL)~\cite{DBLP:journals/corr/ZhangYB19} is complex due to high-dimensional state and action spaces, which impede convergence to optimal policies.

Here, we identify a class of systems where learning the agents in a \emph{cascading} way is both effective and efficient.
Concretely, if we assign an index to each agent, and each agent only depends on agents with lower index, we can learn policies in a sequential order.
This leads to a low-dimensional space for the learning algorithm, which leads to fast convergence.
While in general suboptimal, we show that this approach still leads to Pareto-optimal results.

\smallskip

In summary, this paper makes the following main contributions:
\begin{itemize}
  \item We propose \emph{distributed shielding}, the first MA shielding approach with absolute safety guarantees and scalability, yet without online communication.
  To this end, our approach integrates shield synthesis and assume-guarantee reasoning.

  \item We propose (shielded) \emph{cascading learning}, a scalable MARL approach for systems with acyclic dependency structure, which further benefits from assume-guarantee reasoning.

  \item We evaluate our approaches in two case studies.
  First, we demonstrate that distributed shielding is scalable and, thanks to the integration of assume-guarantee reasoning, applicable.
  Second, we demonstrate that shielded cascading learning is efficient and achieves state-of-the-art performance.
\end{itemize}

\subsection{Related Work}

\paragraph{Shielding.}

As mentioned, shielding is a technique that computes a \emph{shield}, which prevents an agent from taking unsafe actions.
Thus, any policy under a shield is safe, which makes it attractive for safety both during learning and after deployment.
Shields are typically based on game-theoretic results, where they are called \emph{winning strategies}~\cite{DBLP:reference/mc/BloemCJ18}.
Early applications of shields in learning were proposed for timed systems~\cite{DBLP:conf/atva/DavidJLLLST14} and discrete systems~\cite{DBLP:conf/aaai/AlshiekhBEKNT18}.
The idea has since been extended to probabilistic systems~\cite{DBLP:conf/concur/0001KJSB20,DBLP:conf/ijcai/YangMRR23}, partial observability~\cite{DBLP:conf/aaai/Carr0JT23}, and continuous-time dynamics~\cite{DBLP:conf/aisola/BrorholtJLLS23,DBLP:conf/aisola/BrorholtHLS24}.
For more background we refer to surveys~\cite{DBLP:conf/birthday/KonighoferBEP22,DBLP:journals/tmlr/KrasowskiTM0WA23}.
In this work, we focus on discrete but multi-agent systems, which we now review in detail.

\paragraph{Multi-agent shielding.}

An early work on multi-agent enforcement considered a very restricted setting with deterministic environments where the specification is already given in terms of valid actions and not in terms of states~\cite{DBLP:conf/amcc/BharadwajBDKT19}.
Thus, the shield does not reason about the dynamics and simply overrides forbidden actions.

Model-predictive shielding assumes a backup policy together with a set of recoverable states from which this policy can guarantee safety.
Such a backup policy may for instance be implemented by a shield, and is combined with another (typically learned) policy.
First, a step with the second policy is simulated and, when the target state is recoverable, this step is executed; otherwise, the fallback policy is executed.
Crucially, this assumes that the environment is deterministic.
Zhang et al.\ proposed a multi-agent version~\cite{DBLP:journals/corr/ZhangB19}, where the key insight is that only some agents need to use the backup policy.
For scalability, the authors propose a greedy algorithm to identify a sufficiently small subset of agents.
However, the ``shield'' is centralized, which makes this approach not scalable.

Another work computes a safe policy online~\cite{RajuBDT21}, which may be slow.
Agents in close proximity create a communication group, and they communicate their planned trajectories for the next $k$ steps.
Each agent has an agreed-on priority in which they have to resolve safety violations, but if that is not possible, agents may disturb higher-priority agents.
The approach requires strong assumptions like deterministic system dynamics and immediate communication.

One work suggests to directly reinforcement-learn policies by simply encouraging safety~\cite{DBLP:conf/iclr/QinZCCF21}.
Here, the loss function encodes a safety proof called \emph{barrier certificate}.
But, as with any reward engineering, this approach does not guarantee safety in any way.

Another way to scale up shielding for multi-agent systems is a so-called \emph{factored shield}, which safeguards only a subset of the state space, independent of the number of agents~\cite{DBLP:conf/atal/Elsayed-AlyBAET21}.
When an agent moves, it joins or leaves a shield at border states.
However, this approach relies on very few agents ever interacting with each other, as otherwise, there is no significant scalability gain.

Factored shields were extended to \emph{dynamic shields}~\cite{DBLP:conf/atal/XiaoLD23}.
The idea is that, in order to reduce the communication overhead, an agent's shield should ``merge'' dynamically with the shields of other agents in the proximity.
Since the shields are computed with a $k$-step lookahead only, safety is not guaranteed invariantly.

\paragraph{Multi-agent verification.}

\emph{Rational verification} proposes to study specifications only from initial states in Nash equilibria, i.e., assuming that all agents act completely rationally~\cite{DBLP:journals/apin/AbateGHHKNPSW21}.
While that assumption may be useful for rational/optimal agents, we typically have learned agents in mind, which do not always act optimally.

The tool \emph{Verse} lets users specify multi-agent scenarios in a Python dialect and provides black-box (simulations) and white-box (formal proofs; our setting) analysis for time-bounded specifications~\cite{DBLP:conf/cav/LiZBSM23}.

Assume-guarantee reasoning has been applied to multi-agent systems in~\cite{DBLP:conf/amcc/PartoviL14} and in~\cite{DBLP:conf/prima/MikulskiJK22}, but not yet to (multi-agent) shielding.

\paragraph{Outline.}

In the next section, we define basic notation.
In Section~\ref{sect:shielding}, we introduce distributed shielding based on projections and extend it with assume-guarantee reasoning.
In Section~\ref{sect:learning}, we develop cascading learning, tailored to systems with acyclic dependencies.
In Section~\ref{sect:evaluation}, we evaluate our approaches in two case studies.
In Section~\ref{sect:conclusion}, we conclude and discuss future work.

\section{Preliminaries}\label{sect:preliminaries}

\subsection{Transition Systems (MDPs \& LTSs)}

We start with some basic definitions of transition systems.

\begin{definition}[Labeled transition system]
  A \emph{labeled transition system} (LTS) is a triple $\lts = (\states, \act, \trans)$ where $\states$ is the finite state space, $\act$ is the action space, and ${\trans} \subseteq \states \times \act \times \states$ is the transition relation with no dead ends, i.e., for all $s \in \states$ there exists some $a \in \act$ and $s' \in \states$ such that $(s, a, s') \in \trans$.
\end{definition}

\begin{definition}[Markov decision process]
  A \emph{Markov decision process} (MDP) is a triple $\mdp = (\states, \act, P)$ where $\states$ is the finite state space, $\act$ is the action space, and $P \colon \states \times \act \times \states \to [0, 1]$ is the probabilistic transition relation satisfying $\sum_{s' \in \states} P(s, a, s') \in \{0, 1\}$ for all $s \in \states$ and $a \in \act$, and for at least one action, the sum is~1.
\end{definition}

We will view an LTS as an abstraction of an MDP where probabilities are replaced by possibilities.

\begin{definition}[Induced LTS]
  Given an MDP $\mdp = (\states, \act, P)$, the \emph{induced LTS} is $\lts_\mdp = (\states, \act, \trans)$ with $(s, a, s') \in {\trans}$ iff $P(s, a, s') > 0$.
\end{definition}

\begin{definition}[Run]
  Assume an LTS $\lts = (\states, \act, \trans)$ and a finite alternating sequence of states and actions $\rho = s_0 a_0 s_1 a_1 \dots$;
  then, $\rho$ is a \emph{run} of $\lts$ if $(s_i, a_i, s_{i+1}) \in {\trans}$ for all $i \ge 0$.
  Similarly, for an MDP $\mdp = (\states, \act, P)$, $\rho$ is a \emph{run} of $\mdp$ if $P(s_i, a_i, s_{i+1}) > 0$ for all $i \ge 0$.
\end{definition}

We distinguish between strategies and policies in this work.
A strategy prescribes a nondeterministic choice of actions in each LTS state.
Similarly, a policy prescribes a probabilistic choice of actions in each MDP state.
Before defining them formally, we need a notion of restricting the actions to sensible choices.

\begin{definition}[Enabled actions]
  Given an LTS, $\enabled(s) = \{a \in \act \mid \exists s'\colon (s, a, s') \in {\trans}\}$ denotes the \emph{enabled actions} in state~$s$.
  Similarly, given an MDP,
  $\enabled(s) = \{a \in \act \mid \exists s'\colon P(s, a, s') > 0\}$.
\end{definition}

\begin{definition}[Strategy; policy]
  Given an LTS, a (nondeterministic) \emph{strategy} is a function $\strategy \colon \states \to \powerset{\act}$ such that $\emptyset \ne \strategy(s) \subseteq \enabled(s)$ for all $s \in \states$.
  Given an MDP, a (probabilistic) \emph{policy} is a function $\policy \colon \states \times \act \to [0, 1]$ such that $\sum_{a \in \enabled(s)} \policy(s, a) = 1$ and $\bigwedge_{a' \in \act \setminus \enabled(s)} \policy(s, a') = 0$ for all $s \in \states$.
\end{definition}

Note that our strategies and policies are memoryless.
This is justified as we will only consider safety properties in this work, for which memory is not required~\cite{DBLP:reference/mc/BloemCJ18}.
Strategies and policies restrict the possible runs, and we call these runs the outcomes.

\begin{definition}[Outcome]
  A run $\rho = s_0 a_0 s_1 a_1 \dots$ of an LTS is an \emph{outcome} of a strategy~$\strategy$ if $a_i \in \strategy(s_i)$ for all $i \ge 0$.
  Similarly, a run $\rho = s_0 a_0 s_1 a_1 \dots$ of an MDP is an outcome of a policy~$\policy$ if $\policy(s_i, a_i) > 0$ for all $i \ge 0$.
\end{definition}

\subsection{Safety and Shielding}

In this work, we are interested in safety properties, which are characterized by a set of safe (resp.\ unsafe) states.
The goal is to stay in the safe (resp.\ avoid the unsafe) states.
In this section, we introduce corresponding notions, in particular (classical) shields and how they can be applied.

\begin{definition}[Safety property]
  A \emph{safety property} is a set of states $\phi \subseteq \states$.
\end{definition}

\begin{definition}[Safe run]
  Given a safety property $\phi \subseteq \states$, a run $s_0 a_0 s_1 a_1 \dots$ is \emph{safe} if $s_i \in \phi$ for all $i \ge 0$.
\end{definition}

Given an LTS, a safety property $\phi \subseteq \states$ partitions the states into two sets: the \emph{winning states}, from which a strategy exists whose outcomes are all safe, and the complement.
The latter can be computed as the attractor set of the complement~$\states \setminus \phi$~\cite{DBLP:reference/mc/BloemCJ18}.
Since it is hopeless to ensure safe behavior from the complement states, in the following we will only be interested in outcomes starting in winning states, which we abstain from mentioning explicitly.

A shield is a (typically nondeterministic) strategy that ensures safety.
In game-theory terms, a shield is called a \emph{winning strategy}.

\begin{definition}[Shield]
  Given an LTS $(\states, \act, \trans)$ and a safety property $\phi \subseteq \states$, a \emph{shield} $\ensuremath{\text{\protect }}\xspace[\phi]$ is a strategy whose outcomes starting in any winning state are all safe wrt.~$\phi$.
\end{definition}

We often omit~$\phi$ and just write~$\ensuremath{\text{\protect }}\xspace$.
Among all shields, it is known that there is a ``best'' one that allows the most actions.

\begin{definition}[Most permissive shield]
  Given an LTS and a safety property~$\phi$, the \emph{most permissive shield} $\ensuremath{\text{\protect }}\xspace^*[\phi]$ is the shield that allows the largest set of actions for each state~$s \in \states$.
\end{definition}

\begin{lemma}[\cite{DBLP:reference/mc/BloemCJ18}]
  $\ensuremath{\text{\protect }}\xspace^*$ is unique and obtained as the union of all shields~$\ensuremath{\text{\protect }}\xspace$ for~$\phi$:
  $\ensuremath{\text{\protect }}\xspace^*(s) = \{a \in \act \mid \exists \ensuremath{\text{\protect }}\xspace\colon a \in \ensuremath{\text{\protect }}\xspace(s)\}$.
\end{lemma}

The standard usage of a shield is to restrict the actions of a policy for guaranteeing safety.
In this work, we also compose it with another strategy.
For that, we introduce the notion of composition of strategies (recall that a shield is also a strategy).
We can, however, only compose strategies that are compatible in the sense that they allow at least one common action in each state (otherwise the result is not a strategy according to our definition).

\begin{definition}[Composition]
  Two strategies~$\strategy_1$ and~$\strategy_2$ over an LTS $(\states, \act, \trans)$ are \emph{compatible} if $\strategy_1(s) \cap \strategy_2(s) \ne \emptyset$ for all $s \in \states$.

  Given compatible strategies~$\strategy$ and~$\strategy'$, their composition $\strategy \comp \strategy'$ is the strategy
  $(\strategy \comp \strategy')(s) = \strategy(s) \cap \strategy'(s)$.

  We write $\comp_{i<j} \; \strategy_i$ to denote $\strategy_1 \comp \dots \comp \strategy_{j-1}$, and $\comp_i \, \strategy_i$ to denote $\strategy_1 \comp \dots \comp \strategy_n$ when $n$ is clear from the context.
\end{definition}

Given a strategy~$\strategy$ and a compatible shield~$\ensuremath{\text{\protect }}\xspace$, we also use the alternative notation of the \emph{shielded strategy}~$\ensuremath{\text{\protect }}\xspace(\strategy) = \strategy \comp \ensuremath{\text{\protect }}\xspace$.

Given a set of states~$\phi$, we are interested whether an LTS ensures that we will stay in that set~$\phi$, independent of the strategy.

\begin{definition}
  Assume an LTS~$\lts$ and a set of states~$\phi$.
  We write $\lts \models \phi$ if for all strategies~$\strategy$, all corresponding outcomes $s_0 a_0 s_1 a_1 \dots$ satisfy $s_i \in \phi$ for all $i \ge 0$.
\end{definition}

We now use a different view on a shield and apply it to an LTS in order to ``filter out'' those actions that are forbidden by the shield.

\begin{definition}[Shielded LTS]
  Given an LTS $\lts = (\states, \act, \trans)$, a safety property~$\phi$, and a shield~$\ensuremath{\text{\protect }}\xspace[\phi]$, the \emph{shielded LTS} $\lts_\ensuremath{\text{\protect }}\xspace = (\states, \act, \trans_\ensuremath{\text{\protect }}\xspace)$ with ${\trans_\ensuremath{\text{\protect }}\xspace} = \{ (s, a, s') \in {\trans} \mid a \in \ensuremath{\text{\protect }}\xspace(s)\}$ is restricted to transitions whose actions are allowed by the shield.
\end{definition}

The next proposition asserts that a shielded LTS is safe.

\begin{proposition}
  Given an LTS~$\lts$, a safety property~$\phi$, and a corresponding shield~$\ensuremath{\text{\protect }}\xspace[\phi]$, all outcomes of any strategy for $\lts_\ensuremath{\text{\protect }}\xspace$ are safe.
\end{proposition}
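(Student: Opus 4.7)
The plan is to unroll the definitions and show that every outcome of a strategy in $\lts_\shield$ is in fact an outcome of~$\shield$ (viewed as a strategy) in the original LTS~$\lts$, and then invoke the definition of a shield. Throughout I implicitly restrict attention to outcomes starting in a winning state, since the paper has already excluded other initial states from consideration.

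First I would fix an arbitrary strategy~$\strategy'$ on $\lts_\shield$ and an outcome $\rho = s_0 a_0 s_1 a_1 \dots$ of $\strategy'$ with $s_0$ winning. Unpacking the definition of a strategy, $a_i \in \strategy'(s_i) \subseteq \enabled_{\lts_\shield}(s_i)$, and by the definition of $\lts_\shield$ every action enabled there is allowed by~$\shield$, so $a_i \in \shield(s_i)$ for all~$i$. Moreover, since $(s_i, a_i, s_{i+1}) \in \trans_\shield \subseteq \trans$, the sequence~$\rho$ is also a run of~$\lts$. Hence $\rho$ is an outcome of the shield~$\shield$ itself, viewed as a (nondeterministic) strategy on~$\lts$.

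Next, by Definition of shield, every outcome of~$\shield$ starting in a winning state is safe wrt.~$\phi$. Applying this to~$\rho$ yields $s_i \in \phi$ for all $i \ge 0$, i.e., $\rho$ is safe. Since $\strategy'$ and~$\rho$ were arbitrary, the claim follows.

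The only subtlety I expect is a well-formedness check: the definition of LTS requires the absence of dead ends, so one must argue that $\lts_\shield$ is indeed a legal LTS on winning states. This is immediate from the fact that $\shield(s) \ne \emptyset$ at every winning state~$s$ (by definition of strategy) and that each $a \in \shield(s)$ has some successor in~$\trans$ (by $\shield(s) \subseteq \enabled(s)$), so $\enabled_{\lts_\shield}(s) \supseteq \shield(s) \ne \emptyset$; furthermore, since outcomes of~$\shield$ from winning states are safe, they remain in the winning region, which ensures that $\lts_\shield$ has no dead ends along any reachable behavior. With this observation the argument above goes through without further complications.
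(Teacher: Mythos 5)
Your proof is correct: the paper states this proposition without proof, treating it as an immediate unfolding of the definitions, and your argument—that every outcome of any strategy on $\lts_\shield$ is also an outcome of the shield $\shield$ itself on $\lts$, hence safe by the definition of a shield—is exactly that intended argument. Your well-formedness remark is a nice extra touch; in fact it holds even more simply, since $\shield$ being a strategy already gives $\emptyset \ne \shield(s) \subseteq \enabled(s)$ at \emph{every} state, so $\lts_\shield$ has no dead ends anywhere.
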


In other words, $\lts_\ensuremath{\text{\protect }}\xspace \models \phi$.
We analogously define shielded MDPs.

\begin{definition}[Shielded MDP]
  Given an MDP $\mdp = (\states, \act, P)$, a safety property~$\phi$, and a shield~$\ensuremath{\text{\protect }}\xspace$ for~$\lts_\mdp$, the \emph{shielded MDP} $\mdp_\ensuremath{\text{\protect }}\xspace = (\states, \act, P_\ensuremath{\text{\protect }}\xspace)$ is restricted to transitions with actions allowed by~$\ensuremath{\text{\protect }}\xspace$:
  $P_\ensuremath{\text{\protect }}\xspace(s, a, s') = P(s, a, s')$ if $a \in \ensuremath{\text{\protect }}\xspace(s)$, and $P_\ensuremath{\text{\protect }}\xspace(s, a, s') = 0$ otherwise.
\end{definition}

\begin{proposition}
  Assume an MDP~$\mdp$, a safety property~$\phi$, and a corresponding shield~$\ensuremath{\text{\protect }}\xspace[\phi]$ for~$\lts_\mdp$.
  Then all outcomes of any policy for $\mdp_\ensuremath{\text{\protect }}\xspace$ are safe.
\end{proposition}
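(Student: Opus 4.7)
The plan is to reduce the claim about MDPs to the already-stated proposition about LTSs. The two propositions have the same structure, so the natural route is to show that safety of outcomes of policies on $\mdp_\shield$ follows from safety of outcomes of strategies on the induced LTS $\lts_{\mdp_\shield}$, which, in turn, coincides with the shielded LTS $(\lts_\mdp)_\shield$ to which the preceding proposition directly applies.

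The first step is the key bookkeeping observation: unfolding definitions, $\lts_{\mdp_\shield}$ has a transition $(s,a,s')$ iff $P_\shield(s,a,s') > 0$, which by the definition of $\mdp_\shield$ is equivalent to $a \in \shield(s) \wedge P(s,a,s') > 0$, i.e., exactly the transition relation of $(\lts_\mdp)_\shield$. Hence the shielded induced LTS and the induced LTS of the shielded MDP are literally the same object, and the previous proposition guarantees $\lts_{\mdp_\shield} \models \phi$ (applied with the shield $\shield[\phi]$ for~$\lts_\mdp$, which exists by assumption).

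Next, I would lift an arbitrary policy~$\policy$ on $\mdp_\shield$ to a strategy~$\strategy$ on $\lts_{\mdp_\shield}$ by setting $\strategy(s) = \{a \in \act \mid \policy(s,a) > 0\}$. This is a well-defined strategy: nonemptiness follows from $\sum_a \policy(s,a) = 1$, and the containment $\strategy(s) \subseteq \enabled(s)$ (in $\lts_{\mdp_\shield}$) follows from the policy's requirement $\policy(s, a') = 0$ for all $a' \notin \enabled_{\mdp_\shield}(s)$, together with the fact that enabled actions coincide in an MDP and its induced LTS. Given any outcome $\rho = s_0 a_0 s_1 a_1 \dots$ of $\policy$ on $\mdp_\shield$, one has $\policy(s_i, a_i) > 0$, so $a_i \in \strategy(s_i)$, and $P_\shield(s_i, a_i, s_{i+1}) > 0$, so $(s_i, a_i, s_{i+1})$ is a transition of $\lts_{\mdp_\shield}$. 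Thus $\rho$ is an outcome of~$\strategy$ on $\lts_{\mdp_\shield}$, and safety follows from the LTS proposition.

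There is no genuine obstacle; the only point requiring care is the alignment of the two notions of ``enabled'' and the equality $\lts_{\mdp_\shield} = (\lts_\mdp)_\shield$, since the shielded LTS in the earlier proposition is constructed from~$\lts_\mdp$, whereas the shielded MDP is constructed from~$\mdp$ and only afterwards has an induced LTS. Once that identification is made, the argument is essentially a one-line reduction.
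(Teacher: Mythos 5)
Your reduction is correct: the identity $\lts_{\mdp_\shield} = (\lts_\mdp)_\shield$ and the lifting of a policy $\policy$ to the strategy $\strategy(s) = \{a \mid \policy(s,a) > 0\}$ (which preserves outcomes) are exactly what is needed, and both steps check out against the definitions. The paper states this proposition without proof, treating it as an immediate consequence of the LTS case, so your argument simply makes explicit the intended one-line reduction.
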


The last proposition explains how standard shielding is applied to learn safe policies.
Given an MDP~$\mdp$, we first compute a shield~$\ensuremath{\text{\protect }}\xspace$ over the induced LTS~$\lts_\mdp$.
Then we apply the shield to the MDP~$\mdp$ to obtain~$\mdp_\ensuremath{\text{\protect }}\xspace$ and filter unsafe actions.
The shield guarantees that the agent is safe both during and after learning.

From now on we mainly focus on computing shields from an LTS, as the generalization to MDPs is straightforward.

\subsection{Compositional Systems}

Now we turn to compositional systems (LTSs and MDPs) with multiple agents.
We restrict ourselves to $k$-dimensional state spaces~$\states$, i.e., products of variables $\states = \bigtimes_i \states_i$. We allow for sharing some of these variables among the agents by projecting to observation subspaces.
The following is the standard definition of projecting out certain variables while retaining others.
We use the notation that, given an $n$-vector $v = (v_1, \dots, v_n)$, $v[i]$ denotes the $i$-th element~$v_i$.

\begin{definition}[Projection]
  A \emph{projection} is a mapping $\prj \colon \states \to O$ that maps $k$-dimensional vectors $s \in \states$ to $j$-dimensional vectors $o \in O$, where $j \le k$.
  Formally, $\prj$ is associated with a sequence of $j$ indices $1 \le i_1 < \dots < i_j \le k$ such that $\prj(s) = (s[i_1], \dots, s[i_j])$.
  Additionally, we define $\prj(\phi) = \bigcup_{s \in \phi} \{\prj(s)\}$.
\end{definition}

\begin{definition}[Extension]
  Given projection $\prj \colon \states \to O$, the set of states projected to~$o$ is the \emph{extension} $\ext{o} = \{s \in \states \mid \prj(s) = o\}$.
\end{definition}

Later we will also use an alternative projection, which we call \emph{restricted}.
The motivation is that the standard projection above sometimes retains too many states.
The restricted projection instead only keeps those states such that the extension of the projection ($\ext{\cdot}$) is contained in the original set.
For instance, for the state space $\states = \{0, 1\}^2$, the set of states $\phi = \{(0, 0), (0, 1), (1, 0)\}$, and the one-dimensional projection $\prj(s) = s[1]$, we have that $\prj(\phi) = \{0, 1\}$.
The restricted projection removes $1$ as $(1, 1) \notin \phi$.

\begin{definition}[Restricted projection]
  A \emph{restricted projection} is a mapping $\cprj \colon 2^\states \to 2^O$ that maps sets of $k$-dimensional vectors $s \in \states$ to sets of $j$-dimensional vectors $o \in O$, where $j \le k$.
  Formally, $\cprj$ is associated with a sequence of $j$ indices $1 \le i_1 < \dots < i_j \le k$.
  Let $\prj$ be the corresponding (standard) projection and $\phi \subseteq \states$.
  Then $\cprj(\phi) = \{ o \in O \mid \{s \in \states \mid \prj(s) = o\} \subseteq \phi\}$.
  Again, we define $\cprj(\phi) = \bigcup_{s \in \phi} \{\cprj(s)\}$.
\end{definition}

We will apply $\cprj$ only to safety properties~$\phi$.
The following alternative characterization may help with the intuition:
$\cprj(\phi) = \overline{\prj(\overline{\phi})} = O \setminus \prj(\states \setminus \phi)$, where~$\overline{\phi}$ denotes the complement~$\states \setminus \phi$ (resp.\ $O \setminus \phi$) of a set of states~$\phi \subseteq \states$ (resp.\ observations $\phi \subseteq O$).

Crucially, $\prj$ and~$\cprj$ coincide if $\ext{\prj(\phi)} = \phi$, i.e., if the projection of~$\phi$ preserves correlations.
We will later turn our attention to agent safety properties, where this is commonly the case.

Now we can define a multi-agent LTS and MDP.

\begin{definition}[$n$-agent LTS/MDP]
  An \emph{$n$-agent LTS} $(\states, \act, \trans)$ or an \emph{$n$-agent MDP} $(\states, \act, P)$ have an $n$-dimensional action space $\act = \act_1 \times \dots \times \act_n$ and a family of $n$ projections $\prj_i$, $i = 1, \dots, n$.
  Each \emph{agent}~$i$ is associated with the projection $\prj_i \colon \states \to O_i$ from $\states$ to its \emph{observation space}~$O_i$.
\end{definition}

We note that the observation space introduces partial observability.
Obtaining optimal strategies/policies for partial observability is difficult and generally requires infinite memory~\cite{DBLP:journals/jcss/ChatterjeeCT16}.
Since this is impractical, we restrict ourselves to memoryless strategies/policies.

We can apply the projection function~$\prj$ to obtain a ``local'' LTS, modeling partial observability.

\begin{definition}[Projected LTS]
  For an $n$-agent LTS $\lts = (\states, \act, \trans)$ and an agent~$i$ with projection function $\prj_i \colon \states \to O_i$, the \emph{projected LTS to agent~$i$} is $\lts^i = (O_i, \act_i, \trans_i)$ where $\act_i = \{a[i] \mid a \in \act\}$ and ${\trans_i} = \{(\prj_i(s), a[i], \prj_i(s)') \mid (s, a, s') \in {\trans}\}$.
\end{definition}

\section{Distributed Shield Synthesis}\label{sect:shielding}

We now turn to shielding in a multi-agent setting.
The straightforward approach is to consider the full-dimensional system and compute a global shield.
This has, however, two issues.
First, a global shield assumes communication among the agents, which we generally do not want to assume.
Second, and more importantly, shield computation scales exponentially in the number of variables.

To address these issues, we instead compute \emph{local} shields, one for each agent.
A local shield still keeps its agent safe.
But since we only consider the agent's observation space, the shield does not require communication, and the computation is much cheaper.

\subsection{Projection-Based Shield Synthesis}

Rather than enforcing the global safety property, local shields will enforce agent-specific properties, which we characterize next.

\begin{definition}[$n$-agent safety property]
  Given an $n$-agent LTS or MDP with state space~$\states$, a safety property $\phi \subseteq \states$ is an \emph{$n$-agent safety property} if $\phi = \bigcap_{i=1}^n \phi_i$ consists of \emph{agent safety properties}~$\phi_i$ for each agent~$i$.
\end{definition}

Note that we can let $\phi_i = \phi$ for all~$i$, so this is not a restriction.
But typically we are interested in properties that can be accurately assessed in the agents' observation space (i.e., $\prj_i(\phi_i) = \cprj_i(\phi_i)$).

Next, we define a local shield of an agent, which, like the agent, operates in the observation space.

\begin{definition}[Local shield]
  Given an $n$-agent LTS $\lts = (\states, \act, \trans)$ with observation spaces~$O_i$ and an $n$-agent safety property $\phi = \bigcap_{i=1}^n \phi_i \subseteq \states$, let $\ensuremath{\text{\protect }}\xspace_i \colon O_i \to \powerset{\act_i}$ be a shield for~$\lts^i$
  wrt.~$\cprj_i(\phi_i)$, for some agent $i \in \{1, \dots, n\}$, i.e., $\lts^i \models_{\localshield_i} \cprj_i(\phi_i)$.
  We call~$\ensuremath{\text{\protect }}\xspace_i$ a \emph{local shield} of agent~$i$.
\end{definition}

We define an operation to turn a $j$-dimensional (local) shield into a $k$-dimensional (global) shield.
This global shield allows all global actions whose projections are allowed by the local shield.

\begin{definition}[Extended shield]
  Assume an $n$-agent LTS $\lts = (\states, \act, \trans)$ with projections~$\prj_i$, an $n$-agent safety property $\phi = \bigcap_{i=1}^n \phi_i \subseteq \states$, and a corresponding local shield $\localshield_i$.
  The \emph{extended shield}~$\ext{\localshield_i}$ is defined as $\ext{\localshield_i}(s) = \{a \in \act \mid a[i] \in \localshield_i(\prj_i(s))\}$.
\end{definition}

The following definition is just syntactic sugar to ease reading.

\begin{definition}\label{def:models_shield}
  Assume an LTS $\lts$, a set of states $\phi$, and a shield~$\ensuremath{\text{\protect }}\xspace$ for~$\phi$.
  We write $\lts \models_\ensuremath{\text{\protect }}\xspace \phi$ as an alternative to $\lts_\ensuremath{\text{\protect }}\xspace \models \phi$.
\end{definition}

The following lemma says that it is sufficient to have a local shield ensuring the \emph{restricted} projection $\cprj_i(\phi_i)$ of an agent safety property~$\phi_i$ in order to guarantee safety of the extended shield.

\begin{lemma}\label{lem:proj}
  Assume an $n$-agent LTS $\lts$, a safety property~$\phi_i$, and a local shield $\localshield_i$ such that $\lts^i \models_{\localshield_i} \cprj_i(\phi_i)$.
  Then $\lts \models_{\ext{\localshield_i}} \phi_i$.
\end{lemma}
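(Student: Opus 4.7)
The plan is to lift safety from the projected, shielded LTS to the global one by tracing a single outcome through the projection. I would fix an arbitrary strategy $\strategy$ over $\lts_{\ext{\localshield_i}}$ together with an arbitrary outcome $\rho = s_0\,a_0\,s_1\,a_1\dots$ and prove $s_t \in \phi_i$ for every $t \ge 0$.

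The key first step is to establish that the projected sequence
\[
  \rho' \;=\; \prj_i(s_0)\,a_0[i]\,\prj_i(s_1)\,a_1[i]\dots
\]
is an outcome of $\lts^i_{\localshield_i}$. Two observations are needed. First, since $\rho$ is an outcome of $\lts_{\ext{\localshield_i}}$, each step satisfies $(s_t, a_t, s_{t+1}) \in \trans$ with $a_t \in \ext{\localshield_i}(s_t)$, which by the definition of the extended shield unfolds to $a_t[i] \in \localshield_i(\prj_i(s_t))$. Second, by the definition of the projected LTS, $(s_t, a_t, s_{t+1}) \in \trans$ immediately yields $(\prj_i(s_t), a_t[i], \prj_i(s_{t+1})) \in \trans_i$. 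Combining these, every step of $\rho'$ is admissible under $\localshield_i$, so $\rho'$ is an outcome of the memoryless strategy that chooses $a_t[i]$ at each visited projected state, extended arbitrarily elsewhere via the nonempty sets $\localshield_i(\cdot)$ to remain a well-defined strategy.

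The second step applies the hypothesis $\lts^i \models_{\localshield_i} \cprj_i(\phi_i)$, which yields $\prj_i(s_t) \in \cprj_i(\phi_i)$ for every $t$. Unfolding the definition of the restricted projection, this means that the entire extension $\ext{\prj_i(s_t)}$ is contained in $\phi_i$; since $s_t \in \ext{\prj_i(s_t)}$ trivially, we conclude $s_t \in \phi_i$, as desired.

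The main obstacle I anticipate is bookkeeping around the implicit ``start from winning states'' convention of the $\models$ notation: one must argue that whenever $\prj_i(s_0)$ is winning in $\lts^i$ with respect to $\cprj_i(\phi_i)$, the global state $s_0$ is also winning in $\lts$ with respect to $\phi_i$ under $\ext{\localshield_i}$. This actually falls out of the construction itself, since the argument above exhibits $\ext{\localshield_i}$ as a safety witness from every such $s_0$. Crucially, the proof uses $\cprj_i$ rather than $\prj_i$: only the restricted projection forces every pre-image of a visited observation to lie in $\phi_i$, which is exactly what licenses the final inclusion and would fail in general for $\prj_i$.
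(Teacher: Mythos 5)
Your proof is correct and essentially matches the paper's argument: both rest on the two observations that any global outcome under $\ext{\localshield_i}$ projects to an outcome of $\lts^i$ under $\localshield_i$, and that membership of $\prj_i(s_t)$ in $\cprj_i(\phi_i)$ forces $s_t \in \phi_i$. The paper phrases this as a contraposition (an unsafe global outcome would project to an unsafe local one), but that is the same argument run backwards; your remark on why $\cprj_i$ rather than $\prj_i$ is essential is exactly the point the paper illustrates with its counterexample after the lemma.
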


\begin{proof}
  The proof is by contraposition.
  Assume that there is an unsafe outcome $\rho$ in $\lts$ (starting in a winning state) under the extended shield~$\ext{\localshield_i}$, i.e., $\rho$ contains a state $s \notin \phi$.
  Then the projected run $\prj_i(s_0) \, a[i] \, \prj_i(s_1) \dots$ is an outcome of $\lts^i$ under local shield~$\localshield_i$, and $\prj_i(s) \notin \cprj_i(\phi)$ by the definition of $\cprj$.
  This contradicts that~$\localshield_i$ is a local shield.
\end{proof}

The following example shows that the \emph{restricted} projection is necessary.
Consider the LTS~$\lts$ where $\states = \{0, 1\}^2$, $\act = \{z, p\}^2$, and $\trans = \{$%
$((0,0), (z,z), (0,0)),$
$((0,0), (z,p), (0,1)),$
$((0,0), (p,z), (1,0)),$
$((0,0), (p,p), (1,1))$%
$\}$.
For $i = 1, 2$ let $\phi_i = \{(0,0),(0,1),(1,0)\}$ and $\prj_i$ project to the $i$-th component~$O_i$.
Then $\prj_i(\phi_i) = \{0, 1\} = \prj_i(\states)$, i.e., all states in the projection are safe, and hence a local shield may allow $\ensuremath{\text{\protect }}\xspace_i(0) = \{z, p\}$.
But then the unsafe state $(1, 1)$ would be reachable in~$\lts$.

\smallskip

If $\ensuremath{\text{\protect }}\xspace = \comp_i \, \ext{\localshield_i}$ exists, we call it a \emph{distributed shield}.
This terminology is justified in the next theorem, which says that we can synthesize $n$ local shields in the projections and then combine these local shields to obtain a safe shield for the global system.

\begin{theorem}[Projection-based shield synthesis]\label{thm:shield_simple}
  Assume an $n$-agent LTS $\lts = (\states, \act, \trans)$ and an $n$-agent safety property $\phi = \bigcap_{i=1}^n \phi_i \subseteq \states$.
  Moreover, assume local shields~$\ensuremath{\text{\protect }}\xspace_i$
  for all $i = 1, \dots, n$.
  If $\ensuremath{\text{\protect }}\xspace = \comp_i \, \ext{\localshield_i}$ exists, then~$\ensuremath{\text{\protect }}\xspace$ is a shield for $\lts$ wrt.\ $\phi$ (i.e., $\lts_\ensuremath{\text{\protect }}\xspace \models \phi$).
\end{theorem}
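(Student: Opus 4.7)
My plan is to reduce the theorem to Lemma~\ref{lem:proj} by observing that composition of shields only further restricts the set of allowed actions, so every run under the composed shield is also a run under each individual extended shield. Since each extended shield already enforces its agent's safety property on the global LTS (by the lemma), all the properties hold simultaneously, which gives~$\phi$.

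More concretely, the first step is to unfold the definition of composition: $\shield(s) = \bigcap_i \ext{\localshield_i}(s)$ for every~$s$. Hence ${\trans_\shield} \subseteq {\trans_{\ext{\localshield_i}}}$ for every~$i$, and consequently any outcome $s_0 a_0 s_1 a_1 \dots$ of any strategy $\strategy$ for $\lts_\shield$ is also an outcome of $\strategy$ viewed as a strategy for $\lts_{\ext{\localshield_i}}$ (since at each step $a_k \in \shield(s_k) \subseteq \ext{\localshield_i}(s_k)$, the transition $(s_k,a_k,s_{k+1})$ is preserved).

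The second step invokes Lemma~\ref{lem:proj}: because $\localshield_i$ is a local shield for $\phi_i$, i.e.\ $\lts^i \models_{\localshield_i} \cprj_i(\phi_i)$, the lemma yields $\lts \models_{\ext{\localshield_i}} \phi_i$. By the previous paragraph, every outcome under~$\shield$ is also an outcome under~$\ext{\localshield_i}$, hence every state~$s_k$ on such an outcome lies in~$\phi_i$. This holds for all $i=1,\dots,n$, so $s_k \in \bigcap_i \phi_i = \phi$ for all $k \ge 0$, which is precisely $\lts_\shield \models \phi$.

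The main thing to be careful about is that the outcomes of $\lts_\shield$ start in winning states (as per the convention adopted earlier in the paper). For Lemma~\ref{lem:proj} to apply, the projected initial state must be a winning state in~$\lts^i$; this is immediate, since a state that is winning in $\lts$ under~$\shield$ must a fortiori not lead inevitably to $\states \setminus \phi_i$ under any of the more permissive shields~$\ext{\localshield_i}$, whose projected counterparts are winning in~$\lts^i$. Beyond this small bookkeeping point, the argument is a direct chaining of the composition definition with Lemma~\ref{lem:proj}, so no new technical machinery is needed.
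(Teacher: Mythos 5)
Your proof is correct and follows essentially the same route as the paper: apply Lemma~\ref{lem:proj} to each local shield to obtain $\lts \models_{\ext{\localshield_i}} \phi_i$, then use the fact that the composed shield is more restrictive than each $\ext{\localshield_i}$ to conclude that all $\phi_i$, and hence $\phi$, hold under $\shield$. The only difference is that you spell out explicitly the monotonicity step (${\trans_\shield} \subseteq {\trans_{\ext{\localshield_i}}}$), which the paper leaves implicit (and which its Lemma~\ref{lem:preceq} also captures).
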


\begin{proof}
  By definition, each local shield~$\ensuremath{\text{\protect }}\xspace_i$ ensures that the (\emph{restricted} projected) agent safety property $\phi_i$ holds in $\lts^i$.
  Since $\lts^i$ is a projection of $\lts$, any distributed shield with $i$-th component $\ensuremath{\text{\protect }}\xspace_i$ also preserves $\phi_i$ in $\lts$ (by Lemma~\ref{lem:proj}).
  Hence, $\ensuremath{\text{\protect }}\xspace = \comp_i \, \ext{\localshield_i}$ ensures all agent safety properties $\phi_i$ and thus $\phi = \bigcap_{i=1}^n \phi_i$.
\end{proof}

\smallskip

Unfortunately, the theorem is often not useful in practice because the local shields may not exist.
The projection generally removes the possibility to coordinate with other agents.
By \emph{coordination} we do not mean (online) communication but simply (offline) agreement on ``who does what.''
Often, this coordination is necessary to achieve agent safety.
We address this lack of coordination in the next section.

\subsection{Assume-Guarantee Shield Synthesis}

Shielding an LTS removes some transitions.
Thus, by repeatedly applying multiple shields to the same LTS, we obtain a sequence of more and more restricted LTSs.

\begin{definition}[Restricted LTS]
  Assume two LTSs $\lts = (\states, \act, \trans)$, $\lts' = (\states, \act, \trans')$.
  We write $\lts \preceq \lts'$ if ${\trans} \subseteq {\trans'}$.
\end{definition}

\begin{lemma}\label{lem:preceq}
  Let $\lts \preceq \lts'$ be two LTSs.
  Then $\lts' \models \phi \implies \lts \models \phi$.
\end{lemma}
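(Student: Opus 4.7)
The plan is to argue by a direct strategy-lifting argument: every strategy of the more restricted LTS~$\lts$ can be reused verbatim as a strategy of the larger LTS~$\lts'$, and every outcome under it in~$\lts$ is still an outcome in~$\lts'$. Safety in~$\lts'$ then transports back to safety in~$\lts$.

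Concretely, I would proceed as follows. First, fix an arbitrary strategy~$\strategy$ for~$\lts$ and an arbitrary outcome $\rho = s_0 a_0 s_1 a_1 \dots$ of~$\strategy$ in~$\lts$; the goal is to show $s_i \in \phi$ for all $i\ge 0$. Second, observe that $\enabled_\lts(s) \subseteq \enabled_{\lts'}(s)$ for every state~$s$, because ${\trans} \subseteq {\trans'}$. Consequently $\emptyset \ne \strategy(s) \subseteq \enabled_\lts(s) \subseteq \enabled_{\lts'}(s)$ for all~$s$, so~$\strategy$ is also a well-defined strategy of~$\lts'$. Third, since $\rho$ is an outcome of~$\strategy$ in~$\lts$, we have $a_i \in \strategy(s_i)$ and $(s_i, a_i, s_{i+1}) \in {\trans} \subseteq {\trans'}$ for all~$i$, so $\rho$ is a run of~$\lts'$ and, moreover, an outcome of~$\strategy$ viewed as a strategy of~$\lts'$. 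Finally, the hypothesis $\lts' \models \phi$ yields $s_i \in \phi$ for all~$i \ge 0$, and since~$\strategy$ and~$\rho$ were arbitrary, $\lts \models \phi$ follows.

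There is no real obstacle here; the proof is essentially bookkeeping through the definitions of strategy, outcome, and~$\models$. The only point that requires a line of care is checking that restricting the transition relation does not invalidate~$\strategy$ as a strategy of~$\lts'$, which is immediate from the monotonicity of $\enabled(\cdot)$ in~$\trans$. No appeal to the no-dead-ends assumption or to any of the shielding machinery is needed, which makes the lemma a clean, purely order-theoretic fact about safety and transition refinement.
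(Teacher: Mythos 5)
Your proof is correct and follows essentially the same route as the paper's: since ${\trans} \subseteq {\trans'}$, every strategy and every outcome of~$\lts$ carries over to~$\lts'$, so safety of all outcomes of~$\lts'$ implies safety of all outcomes of~$\lts$. You merely spell out the strategy-lifting step (via monotonicity of $\enabled(\cdot)$) that the paper's two-line proof leaves implicit.
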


\begin{proof}
  As $\trans'$ contains all transitions of $\trans$, it has at least the same outcomes.
  If no outcome of~$\lts'$ leaves~$\phi$, the same holds for~$\lts$.
\end{proof}

We now turn to the main contribution of this section.
For a safety property $\phi'$, we assume an $n$-agent safety property $\phi = \bigcap_{i=1}^n \phi_i$ is given such that $\phi \subseteq \phi'$ (i.e., $\phi$ is more restrictive).
We use these agent safety properties~$\phi_i$ to filter out behavior during shield synthesis.
They may contain additional guarantees, which are used to coordinate responsibilities between agents.

Crucially, in our work, the guarantees are given in a certain order.
We assume \emph{wlog} that the agent indices are ordered from~1 to~$n$ such that agent~$i$ can only rely on the safety properties of all agents~$j < i$.
Thus, agent~$i$ guarantees~$\phi_i$ by assuming $\bigcap_{j < i}\phi_j$.
This is important to avoid problems with (generally unsound) circular reasoning.
In particular, agent~$1$ cannot rely on anything, and $\phi_n$ is not relied on.

The theorem then states that if each agent guarantees its safety property~$\phi_i$, and only relies on guarantees $\phi_j$ such that $j<i$. The result is a (safe) distributed shield.
The described condition is formally expressed as $\left(\lts_{\ensuremath{\text{\protect }}\xspace^*[\bigcap_{j < i} \phi_j]} \right)^i \models_{\localshield_i} \cprj_i(\phi_i)$, where we use the most permissive shield~$\ensuremath{\text{\protect }}\xspace^*$ for unicity.

\begin{theorem}[Assume-guarantee shield synthesis]\label{thm:shield_agr}
  Assume an $n$-agent LTS $\lts = (\states, \act, \trans)$ with projections $\prj_i$ and an $n$-agent safety property $\phi = \bigcap_i \phi_i$.
  Moreover, assume (local) shields $\localshield_i$ for all $i$ such that $\left(\lts_{\ensuremath{\text{\protect }}\xspace^*[\bigcap_{j < i} \phi_j]} \right)^i \models_{\localshield_i} \cprj_i(\phi_i)$.
  Then, if $\ensuremath{\text{\protect }}\xspace = \comp_i \, \ext{\localshield_i}$ exists, it is a shield for~$\lts$ wrt.~$\phi$ (i.e., $\lts_\ensuremath{\text{\protect }}\xspace \models \phi$).
\end{theorem}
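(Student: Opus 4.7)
My plan is induction on $i$ to establish the stronger invariant $\lts_{\comp_{j \le i}\ext{\shield_j}} \models \bigcap_{j \le i}\phi_j$ for every $i \in \{1,\dots,n\}$; the theorem then follows from the case $i = n$. For the base case $i = 1$, the empty intersection $\bigcap_{j < 1}\phi_j$ is vacuously $\states$, so $\shield^*[\states]$ allows all enabled actions and $\lts_{\shield^*[\states]} = \lts$. The hypothesis reduces to $\lts^1 \models_{\shield_1}\cprj_1(\phi_1)$, and Lemma~\ref{lem:proj} immediately yields $\lts \models_{\ext{\shield_1}}\phi_1$.

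For the inductive step, write $\phi^{<i} = \bigcap_{j<i}\phi_j$ and $\lts' = \lts_{\shield^*[\phi^{<i}]}$. By the inductive hypothesis, $\comp_{j \le i-1}\ext{\shield_j}$ is itself a shield for $\phi^{<i}$ on $\lts$; the characterization of $\shield^*[\phi^{<i}]$ as the union of all shields for $\phi^{<i}$ then gives the pointwise inclusion $\comp_{j \le i-1}\ext{\shield_j}(s) \subseteq \shield^*[\phi^{<i}](s)$ for every $s$, so $\lts_{\comp_{j \le i-1}\ext{\shield_j}} \preceq \lts'$. Applying Lemma~\ref{lem:proj} to $\lts'$ with $\shield_i$ (valid because the theorem's premise is exactly $(\lts')^i \models_{\shield_i}\cprj_i(\phi_i)$) yields $\lts' \models_{\ext{\shield_i}}\phi_i$. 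Since $\lts' \models \phi^{<i}$ holds by construction and removing further transitions preserves a universal safety judgement, also $(\lts')_{\ext{\shield_i}} \models \bigcap_{j \le i}\phi_j$. A short inclusion argument on transition sets shows $\lts_{\comp_{j \le i}\ext{\shield_j}} \preceq (\lts')_{\ext{\shield_i}}$, and Lemma~\ref{lem:preceq} then transfers safety to the LTS of interest.

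The main obstacle is the mismatch between the two LTSs at play: the premise about $\shield_i$ is phrased over the pre-restricted $\lts'$, whereas the conclusion must be proved about $\lts$ under the concrete composed distributed shield. The pivotal observation bridging them is that, by the induction hypothesis, the partial composition $\comp_{j \le i-1}\ext{\shield_j}$ is \emph{some} shield for $\phi^{<i}$ and is therefore pointwise contained in the most permissive one; once this containment is in place, the rest reduces to Lemma~\ref{lem:proj} applied on $\lts'$ combined with the monotonicity of Lemma~\ref{lem:preceq}. The existence assumption $\shield = \comp_i\ext{\shield_i}$ is used implicitly to guarantee that every intermediate composition $\comp_{j \le i}\ext{\shield_j}$ is a well-defined strategy, since compatibility of the full composition immediately implies compatibility of every prefix.
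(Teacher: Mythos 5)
Your proof is correct and follows essentially the same route as the paper's: both hinge on the observation that the partial composition $\comp_{j < i} \ext{\localshield_j}$ is itself a shield for $\bigcap_{j<i}\phi_j$ and hence pointwise contained in $\shield^*[\bigcap_{j<i}\phi_j]$, so that Lemma~\ref{lem:preceq} transfers the premise from the pre-restricted LTS to the actual composed one, with Lemma~\ref{lem:proj} lifting each local shield. Your explicit induction merely spells out what the paper's step~$(*)$ leaves implicit, which is a welcome clarification but not a different argument.
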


\begin{proof}
  Assume $\lts$, $\phi$, and local shields $\localshield_i$ as in the assumptions.
  Observe that for $i = 1$, $\bigcap_{j < i} \phi_i = \states$, and that $\lts_{\ensuremath{\text{\protect }}\xspace^*[\states]} = \lts$.
  Then:
  \begin{align*}
    & \bigwedge_i  \left(\lts_{\ensuremath{\text{\protect }}\xspace^*[\bigcap_{j < i} \phi_j]} \right)^i \models_{\localshield_i} \cprj_i(\phi_i) \\
    \overset{\text{Lem.~\ref{lem:proj}}}{\implies} 
      & \bigwedge_i \lts_{\ensuremath{\text{\protect }}\xspace^*[\bigcap_{j < i} \phi_j]} \models_{\ext{\localshield_i}} \phi_i
    \overset{(*)}{\implies}
       \bigwedge_i \lts_{\comp_{j < i} \ext{\localshield_j}} \models_{\ext{\localshield_i}} \phi_i \\
    \overset{\text{Def.~\ref{def:models_shield}}}{\implies} 
      & \bigwedge_i \lts \models_{\comp_{j \le i} \ext{\localshield_j}} \phi_i
    \implies
       \lts \models_{\comp_i \, \ext{\localshield_i}} \phi
    \overset{\text{Def.~\ref{def:models_shield}}}{\implies}
       \lts_{\ensuremath{\text{\protect }}\xspace} \models \phi
  \end{align*}

  Step~$(*)$ holds because the composition $\comp_{j \leq i} \ext{\localshield_j}$ of the local shields up to index $i$ satisfy $\phi_i$ under the previous guarantees $\phi_j$, $j < i$.
  Thus, $\lts_{\comp_{j < i} \ext{\localshield_j}} \preceq \lts_{\ensuremath{\text{\protect }}\xspace^*[\bigcap_{j < i} \phi_j]}$, and the conclusion follows by applying Lemma~\ref{lem:preceq}.
\end{proof}

Finding the local safety properties~$\phi_i$ is an art, and we leave algorithmic synthesis of these properties to future work.
But we will show in our case studies that natural choices often exist, sometimes directly obtained from the (global) safety property.

\section{Cascading Learning}\label{sect:learning}

In the previous section, we have seen how to efficiently compute a distributed shield based on assume-guarantee reasoning.
In this section, we turn to the question how and under which condition we can efficiently learn multi-agent policies in a similar manner.

We start by defining the multi-agent learning objective.

\begin{definition}[$n$-agent cost function]
  Given an $n$-agent MDP $\mdp = (\states, \act, P)$ with projections~$\prj_i \colon \states \to O_i$, an \emph{$n$-agent cost function} $c = (c_1, \dots, c_n)$ consists of (local) cost functions $c_i \colon O_i \times \act_i \to \R$.
  The total immediate cost $c \colon \states \times \act \to \R$ is $c(s, a) = \sum_{i=1}^n c_i(\prj_i(s), a[i])$ for $s \in \states$ and $a \in \act$.
\end{definition}

An agent policy is obtained by projection, analogous to a local shield.
Next, we define the notion of instantiating an $n$-agent MDP with a policy, yielding an $(n-1)$-agent MDP.

\begin{definition}[Instantiating an agent]
  Given an $n$-agent MDP $\mdp = (\states, \act, P)$ and agent policy $\policy \colon O_i \times \act_i \to [0, 1]$, the \emph{instantiated MDP} is $\mdp_{\policy} = (\states, \act', P')$, where
  $\act' = \act_1 \times \dots \times \act_{i-1} \times \act_{i+1} \times \dots \times \act_n$
  and, for all $s, s' \in \states$ and $a' \in \act'$, $P'(s, a', s') = \sum_{a_i} \! \policy(\prj_i(s), a_i) \cdot P(s, (a'[1], \dots, a'[i-1], a_i, a'[i], \dots, a'[n-1]), s')$.
\end{definition}

We will need the concept of a projected, local run of an agent.

\begin{definition}[Local run]
  Given a run $\rho = s_0 a_0 s_1 a_1 \dots$ over an $n$-agent MDP $(\states, \act, P)$, the projection to agent~$i$ is the \emph{local run} $\prj_i(\rho) = \prj_i(s_0) \, a_0[i] \, \prj_i(s_1) \, a_1[i] \dots$
\end{definition}

Given a policy~$\policy \colon \states \times \act \to [0,1]$, the probability of a finite local run~$\prj_i(\rho)$ being an outcome of $\policy$ is the sum of the probabilities of outcomes of $\policy$ whose projection to $i$ is $\prj_i(\rho)$.

The probability of a run $\rho$ of length~$\ell$ being an outcome of policy~$\policy$ is $\prob(\rho \mid \policy) = \prod_{i=0} \policy(s_i, a_i) \cdot P(s_i, a_i, s_{i+1})$.
We say that agent~$i$ depends on agent~$j$ if agent~$j$'s action choice influences the probability for agent~$i$ to observe a (local) run.

\begin{definition}[Dependency]
  Given an $n$-agent MDP $(\states, \act, P)$, agent~$i$ \emph{depends} on agent~$j$ if
  there exists a local run~$\prj_i(\rho)$ of length~$\ell$ and $n$-agent policies~$\policy, \policy'$ that differ only in the $j$-th agent policy, i.e., $\policy = (\policy_1, \dots, \policy_n)$ and $\policy' = (\policy_1, \dots, \policy_{j-1}, \policy_j', \policy_{j+1}, \dots, \policy_n)$, such that the probability of observing~$\prj_i(\rho)$ under~$\pi$ and~$\pi'$ differ:
  \begin{align*}
    \sum\nolimits_{\rho' \colon \prj_i(\rho') = \prj_i(\rho)} \prob(\rho' \mid \policy) \ne \sum\nolimits_{\rho' \colon \prj_i(\rho') = \prj_i(\rho)} \prob(\rho' \mid \policy')
  \end{align*}
  where we sum over all runs $\rho'$ of length~$\ell$ with the same projection.
\end{definition}

In practice, we can typically perform an equivalent syntactic check.
Next, we show how to arrange dependencies in a graph.

\begin{definition}[Dependency graph]
  The \emph{dependency graph} of an $n$-agent MDP is a directed graph $(V, E)$ where $V = \{1, \dots, n\}$ and $E = \{(i, j) \mid i \text{ depends on } j\}$.
\end{definition}

As the main contribution of this section, Algorithm~\ref{algo:learn} shows an efficient multi-agent learning framework, which we call \emph{cascading learning}.
In order to apply the algorithm, we require an acyclic dependency graph (otherwise, an error is thrown in line~\ref{line:error}).
Then, we train the agents in the order suggested by the dependencies, which, as we will see, leads to an attractive property.

\begin{algorithm}[t]
  \caption{Cascading shielded learning of $n$-agent policies\hspace*{-4mm}}\label{algo:learn}

  \Input{Shielded $n$-agent MDP $\mdp_\ensuremath{\text{\protect }}\xspace$, \\ $n$-agent cost function $c = (c_1, \dots, c_n)$}
  \Output{$n$-agent policy $(\policy_1, \dots, \policy_n)$}

  \BlankLine

  Build dependency graph $G$ of $\mdp_\ensuremath{\text{\protect }}\xspace$\;

  Let $\mdp' := \mdp_\ensuremath{\text{\protect }}\xspace$\;

  \While{true}{
    \If{there is no node in $G$ with no outgoing edges}{
      error(``Cyclic dependencies are incompatible.'')\; \label{line:error}
    }
    Let $i$ be a node in $G$ with no outgoing edges\;

    Train agent policy~$\policy_i$ on the MDP $\sandbox(\mdp', i)$ wrt.\ cost function $c_i$\; \label{line:learn}

    Update $G$ by removing node $i$ and all incoming edges\;

    \lIf{$G$ is empty}{
      \Return{$(\policy_1, \dots, \policy_n)$}
    }

    Update $\mdp' := \mdp'_{\policy_i}$ \tcp*{(i.e., instantiated shielded MDP)}
  }
\end{algorithm}

To draw the connection to the distributed shield, the crucial insight is that we can again use it for assume-guarantee reasoning to prevent behaviors that may otherwise create a dependency.

The procedure $\sandbox(\mdp, i)$ in line~\ref{line:learn} takes an $n$-agent MDP~$\mdp$ and an agent index $i \in \{1, \dots, n\}$.
The purpose is to instantiate every agent except agent~$i$.
Since agent~$i$ does not depend on these agents, we arbitrary choose a uniform policy for the instantiation.

\smallskip

Next, we show an important property of Algorithm~\ref{algo:learn}: it trains policies in-distribution.

\begin{definition}[In-distribution]
  Given two $1$-agent MDPs $\mdp = (\states, \act, P)$ and $\mdp' = (\states, \act, P')$, an agent policy~$\policy$ is \emph{in-distribution} if the probability of any local run in $\mdp$ is the same as in $\mdp'$.
\end{definition}

Now we show that the distribution of observations an agent policy~$\policy_i$ makes during training in Algorithm~\ref{algo:learn} is identical with the distribution of observations made in~$\mdp^*$, the instantiation with \emph{all other} agent policies computed by Algorithm~\ref{algo:learn}.

\begin{theorem}
  Let $\mdp$ be an $n$-agent MDP with acyclic dependency graph.
  For every agent~$i$, the following holds.
  Let $\mdp^*$ be the $1$-agent MDP obtained by iteratively instantiating the original MDP $\mdp$ with policies $\policy_j$ for all $j \ne i$.
  The agent policy~$\policy_i$ trained with Algorithm~\ref{algo:learn} is in-distribution wrt.\ $\sandbox(\mdp', i)$ (from line~\ref{line:learn}) and $\mdp^*$.
\end{theorem}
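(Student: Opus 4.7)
The plan is to trace through Algorithm~\ref{algo:learn} to identify exactly which agents are already instantiated when agent~$i$ is trained, and then to use the dependency definition to show that the policies chosen for the remaining (not-yet-trained) agents are irrelevant to the distribution of agent~$i$'s local runs.

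First, I would introduce notation. Let $B_i \subseteq \{1, \dots, n\} \setminus \{i\}$ be the set of agents trained strictly before agent~$i$ by Algorithm~\ref{algo:learn}, and let $A_i$ be those trained strictly after. Let $D_i = \{ j \mid (i,j) \in E \}$ be the out-neighbours of~$i$ in the dependency graph, i.e., the agents $i$ actually depends on. The selection rule picks a node with no outgoing edges in the current graph, so by the time~$i$ is picked, every $j \in D_i$ must already have been removed; that is, $D_i \subseteq B_i$, and hence $A_i \cap D_i = \emptyset$. Acyclicity ensures this topological schedule is always feasible. In $\sandbox(\mdp', i)$ from line~\ref{line:learn}, the agents in $B_i$ are instantiated with their trained policies~$\policy_j$ (this is what $\mdp'$ accumulates across iterations), and the agents in $A_i$ are instantiated by $\sandbox$ with uniform policies; in $\mdp^*$, the agents in $B_i$ are instantiated with exactly the same $\policy_j$, but the agents in $A_i$ use their eventually trained policies.

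Second, I would reduce the theorem to a hybrid argument using non-dependency. Both 1-agent MDPs in question arise from $\mdp$ by instantiating all non-$i$ agents; only the policies for $A_i$ differ. Enumerating $A_i = \{j_1, \dots, j_m\}$, I would consider the sequence of $n$-agent policies $\policy^{(0)}, \dots, \policy^{(m)}$ where $\policy^{(0)}$ uses uniform on $A_i$ and the trained policies on $B_i$, together with some fixed policy for agent~$i$, and each successive $\policy^{(k)}$ is obtained from $\policy^{(k-1)}$ by swapping the $j_k$-th component from uniform to $\policy_{j_k}$. Consecutive members differ only in one component $j_k \in A_i$, and since $i$ does not depend on $j_k$, the contrapositive of the dependency definition gives that the probability of every local run $\prj_i(\rho)$ is preserved across each swap. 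Chaining the $m$ equalities yields identical local-run distributions under $\policy^{(0)}$ and $\policy^{(m)}$, which is precisely the in-distribution claim for $\sandbox(\mdp', i)$ and $\mdp^*$.

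The main obstacle I anticipate is the bookkeeping between the $n$-agent view (where dependency is defined) and the 1-agent view (where \emph{in-distribution} is defined). The dependency definition compares two $n$-agent policies on~$\mdp$, whereas the objects to compare are 1-agent MDPs obtained by iteratively marginalising out the other agents' actions. The cleanest way around this is to observe that the probability of a local run in the 1-agent instantiation equals the marginal of the corresponding local-run probability under the $n$-agent policy on $\mdp$, which follows by unrolling the instantiation definition and noting that the order of instantiations commutes (each step just marginalises one action dimension). With this observation the hybrid argument transfers verbatim. A minor additional point is that the uniform policies used by $\sandbox$ must be valid policies on the shielded MDP; this is fine because the shield always leaves at least one enabled action per state, so uniform-over-enabled is well-defined, and the hybrid argument does not care which concrete policy plays the role of ``uniform'' on $A_i$.
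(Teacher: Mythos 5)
Your proof is correct and follows the same core idea as the paper's: the selection rule of Algorithm~\ref{algo:learn} guarantees that agent~$i$ has no dependency on any agent not yet trained when $\policy_i$ is learned, so the policies substituted for those agents (uniform in $\sandbox(\mdp',i)$, trained in $\mdp^*$) cannot change the distribution of agent~$i$'s local runs. The paper's own proof is a four-line sketch of exactly this observation; your hybrid argument over $A_i$ (needed because the dependency definition only licenses swapping one agent's policy at a time) and the remark that instantiation commutes with marginalising local-run probabilities supply details that the paper leaves implicit.
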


\begin{proof}
  Fix a policy~$\policy_i$.
  If $\policy_i$ is the last trained policy, the statement clearly holds.
  Otherwise, let $\policy_j \ne \policy_i$ be a policy that has not been trained at the time when $\policy_i$ is trained.
  The algorithm asserts that $\policy_i$ has no dependency on $\policy_j$.
  Thus, training $\policy_i$ yields the same policy no matter how $\policy_j$ behaves.
\end{proof}

Note that, despite trained in-distribution, the policies are not globally optimal.
This is because each policy acts egoistically and optimizes its local cost, which may yield suboptimal global cost.

What we can show is that the agent policies $(\policy_1, \dots, \policy_n)$ are \emph{Pareto optimal}~\cite{marl-book}, i.e., they cannot all be strictly improved without raising the cost of at least one agent.
That is, there is no policy~$\policy_i$ that can be replaced by another policy~$\policy_i'$ without strictly increasing the expected local cost of at least one agent.
Indeed:

\begin{theorem}
  If the learning method in line~\ref{line:learn} of Algorithm~\ref{algo:learn} converged to the (local) optima, and these optima are unique, then the resulting policies are Pareto optimal.
\end{theorem}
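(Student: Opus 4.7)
The plan is a direct, contrapositive-style argument that hinges on the \emph{in-distribution} theorem immediately above, combined with the uniqueness hypothesis. Fix an agent~$i$ and any candidate replacement $\policy_i' \ne \policy_i$; I will show that such a replacement must strictly increase agent~$i$'s own expected local cost, which is exactly the required conclusion.

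The first step is to reduce the comparison to a single-agent problem. With the other $n{-}1$ agents' policies held at their cascaded values $\policy_j$ ($j \ne i$), the environment facing agent~$i$ collapses to the $1$-agent MDP $\mdp^*$ obtained by instantiating all those agents in $\mdp$. Because $c_i$ depends only on agent~$i$'s local observations and actions, the expected local cost of agent~$i$ under a candidate $\policy_i$ is a functional of (i)~the cost function~$c_i$ and (ii)~the distribution over local runs $\prj_i(\rho)$ induced by $\policy_i$ on $\mdp^*$.

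The second step invokes the in-distribution theorem: this same distribution over local runs of agent~$i$ coincides with the one under $\sandbox(\mdp',i)$, the sandboxed MDP on which $\policy_i$ was actually trained. Hence the expected local cost that the learner minimized during training is numerically identical to the expected local cost that agent~$i$ incurs under $\mdp^*$. By the assumption that the learner converged to the (local) optimum and that this optimum is unique, $\policy_i$ is the unique minimizer of this expected cost; therefore any $\policy_i' \ne \policy_i$ yields a strictly larger value, i.e., strictly raises agent~$i$'s own cost, which contradicts the hypothesis of a cost-free replacement and thereby establishes Pareto optimality in the sense stated.

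The step I expect to require the most care is the clean identification of the two MDPs that agent~$i$ ``sees'': the sandboxed training MDP (in which agents on whom~$i$ does not depend follow uniform policies) versus the real $\mdp^*$ (in which those same agents follow their cascaded policies). The acyclicity of the dependency graph guarantees that agents on which $i$ has no dependency cannot alter $i$'s distribution over local runs, and the in-distribution theorem is the formal bridge that makes the two expected costs coincide. Once this identification is spelled out, the uniqueness hypothesis closes the argument in a single line; no further analysis of the joint optimization landscape is needed, which is what makes cascading learning attractive here despite agents only optimizing egoistically.
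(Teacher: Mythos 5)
Your proposal is correct and follows essentially the same route as the paper: each $\policy_i$ is the unique minimizer of its own expected local cost in the environment induced by the other agents' cascaded policies, so any replacement $\policy_i' \ne \policy_i$ strictly raises agent~$i$'s own cost. The only difference is presentational --- the paper casts this as an induction over the training order and leaves the in-distribution theorem implicit, whereas you invoke it explicitly to identify the sandboxed training environment with $\mdp^*$, which is a slightly more careful rendering of the same argument.
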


\begin{proof}
  The proof is by induction.
  Assume \emph{wlog} that the policies are trained in the order 1 to~$n$.
  By assumption, $\pi_1$ is locally optimal and unique.
  Hence, replacing~$\pi_1$ by another policy would strictly increase its total cost.
  Now assume we have shown the claim for the first~$i-1$ agents.
  Algorithm~\ref{algo:learn} trained policy~$\policy_i$ wrt.\ the instantiation with the policies~$\pi_1, \dots, \pi_{i-1}$, and by assumption, $\policy_i$ is also locally optimal and unique.
  Thus, again, we cannot replace~$\policy_i$.
\end{proof}

\section{Evaluation}\label{sect:evaluation}

We consider two environments with discretized state spaces.%
\footnote{Available online at \url{https://github.com/AsgerHB/N-player-shield}.}
All experiments were repeated 10 times; solid lines in plots represent the mean cost of these 10 repetitions, while ribbons mark the minimum and maximum costs. Costs are evaluated as the mean of $1{,}000$ episodes.
We use the learning method implemented in \uppaalstratego~\cite{DBLP:conf/atva/JaegerJLLST19} because the implementation has a native interface for shields.
This method learns a policy by partition refinement of the state space.
With this learning method, only few episodes are needed for convergence.
We also compare to the (deep) MARL approach MAPPO~\cite{DBLP:conf/nips/YuVVGWBW22} later.

\subsection{Car Platoon with Adaptive Cruise Controls}

Recall the car platoon model from Section~\ref{sect:platoon}.
The front car follows a random distribution depending on $v_n$ (described in\refappendix{Appendix~\ref{sect:frontcarspeed}}).

The individual cost of an agent is the sum of the observed distances to the car immediately in front of it, during a 100-second episode
(i.e., keeping a smaller distance to the car in front is better).

The decision period causes delayed reaction time, and so the minimum safe distance to the car in front depends on the velocity of both cars.
An agent must learn to drive up to this distance,
and then maintain it
by predicting the acceleration of the car in front.

For this model, all agents share analogous observations~$O_i$ and safety properties~$\phi_i$.
Hence, instead of computing $n-1$ local shields individually, it is sufficient to compute only one local shield and reuse it across all agents (by simply adapting the variables).

\subsubsection{Relative scalability of centralized and distributed shielding}

We compare the synthesis of distributed and (non-distributed) classical shields.
We call the latter \emph{centralized} shields, as they reason about the global state. Hence, they may permit more behavior and potentially lead to better policies, as the agents can coordinate to take jointly safe actions. 
Beside this (often unrealistic) coordination assumption, a centralized shield suffers from scalability issues.
While the size of a single agent's observation space is modest, the global state space is often too large for computing a shield.

We interrupted the synthesis of a centralized shield with $n=3$ cars (i.e., 2 agents) and a full state space after 12 hours, at which point the computation showed less than 3\% progress. 
In order to obtain a centralized shield, we reduced the maximum safe distance from~200 to just~50, shrinking the state space significantly. Synthesizing a centralized shield took 78 minutes for this property, compared to just 3 seconds for a corresponding distributed shield.

Because of the exponential complexity to synthesize a centralized shield, we will only consider distributed shields in the following. 
Synthesizing a shield for a single agent covering the full safety property ($0 < d_i < 200$) took 6.5 seconds, which we will apply to a platoon of 10 cars, well out of reach of a centralized shield.

\subsubsection{Comparing centralized, cascading and MAPPO learning}

\begin{figure}[t]
  \centering
  \includesvg[width=\linewidth]{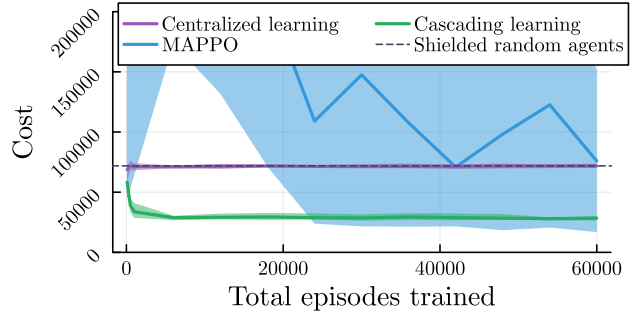}
  \caption{Graph of learning outcomes. Comparison of different learning methods on the 10-car platoon. The centralized and the MAPPO policy were trained for the total episodes indicated, while these episodes were split evenly between each agent in the cascading case.}
  \Description{Graph showing cost as a function of training episodes.
  The (shielded) centralized learning policy has almost no difference between minimum and maximum cost, nearly overlapping with the dashed horizontal line which shows the performance of shielded random agents. The MAPPO policy has a maximum cost that exceeds the bounds of the graph in all cases, while the mean average cost improves with more training and sometimes matches the shielded random agents and the (shielded) centralized learning. The minimum MAPPO outcome achieves a cost a bit lower than the (shielded) cascading learning, shortly after 20,000 training episodes. As for the (shielded) cascading learning, in just a few thousand episodes of training, it reaches a plateau with very little difference between minimum and maximum training outcome.}
  \label{fig:cclearning}
\end{figure}

Given a distributed shield, we consider the learning outcomes for a platoon of 10 cars (9 agents), using the learning method of \uppaalstratego.
We train both a shielded \emph{centralized} policy, which picks a joint action for all cars, and individual shielded policies using cascading learning (Algorithm~\ref{algo:learn}).
As expected from shielded policies, no safety violations were observed while evaluating them.

In the results shown in Figure~\ref{fig:cclearning}, the centralized policy does not improve with more training. 
While it could theoretically outperform distributed policies through communication, the high dimensionality of the state and action space likely prevents that.
It only marginally improves over the random baseline, which has an average cost of~$71{,}871$.
On the other hand, cascading learning quickly converges to a much better cost as low as $26{,}435$.

To examine how cascading learning under a distributed shield compares to traditional MARL techniques,
we implemented the platoon environment in the benchmark suite BenchMARL~\cite{DBLP:journals/corr/abs-2312-01472} and trained an unshielded policy with MAPPO~\cite{DBLP:conf/nips/YuVVGWBW22}, a state-of-the-art MARL algorithm based on PPO~\cite{DBLP:journals/corr/SchulmanWDRK17}, using default hyperparameters.
To encourage safe behavior, we added a penalty of $1{,}600$ to the cost function for every step upon reaching an unsafe state. 
(This value was obtained by starting from $100$ and doubling it until safety started degrading again.)
Recall that shielded agents are safe.

We include the training outcomes for MAPPO in Figure~\ref{fig:cclearning}.
Due primarily to the penalty of safety violations, the agents often have a cost greater than $100{,}000$, even at the end of training.
However, the best MAPPO policy achieved a cost of just $16{,}854$, better than the cascading learning method.
We inspected that policy and found that the cars drive very closely, accepting the risk of a crash.
Overall, there is a large variance of the MAPPO policies in different runs, whereas cascading learning converges to very similar policies, and does so much faster.
This is likely because of the smaller space in which the policies are learned, due to the distributed shield.
Thus, cascading learning is more effective.

\begin{figure}[t]
  \centering
  \includesvg[width=0.8\linewidth]{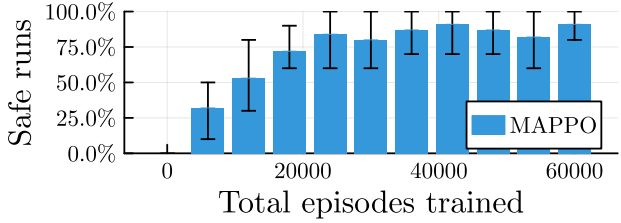}
  \caption{Percentage of safe runs with the MAPPO policy in the 10-car platoon. Blue bars show the mean of 10 repetitions, while black intervals give min and max values.}
  \Description{Bar chart. After 20,000 training episodes, the best-case training outcome shows 100 percent safe runs observed, while the worst case is still unsafe. This trend continues up to 60,000 episodes, where the worst case training outcome still does not result in 100 percent safe traces. The mean safety does not improve noticeably between somewhere around 40,000 and up until the end.}
  \label{fig:ccmappopercentagesafe}
\end{figure}

Since the MAPPO policy is not safe by construction, Figure~\ref{fig:ccmappopercentagesafe} shows the percentage of safe episodes, out of $1{,}000$ episodes.
The agents tend to be safer with more training, but there is no inherent guarantee of safety, and a significant amount of violations remain.

\subsection{Chemical Production Plant}

In the second case study, we demonstrate that distributed shielding applies to complex dependencies where agents influence multiple other agents asymmetrically.
We consider a network of inter-connected chemical production units, each with an internal storage.

\begin{wrapfigure}{r}{.33\linewidth}
  \vspace*{-2mm}
  \centering
  \begin{tikzpicture}[
    node distance=0.3cm and 0.3cm,
    auto,
    blank/.style={
        minimum height=0.4cm
    },
    unit/.style={
        rectangle,
        rounded corners,
        draw=black,
        align=center,
        minimum height=0.4cm
    },
    consumer/.style={
        circle,
        draw=black,
        align=center,
        minimum height=0.1cm
    },
    every path/.style={
        -Latex
    }
]

\node [unit] (u1) {1};
\node [blank, right=of u1] (b1) {};
\node [unit, right=of b1] (u2) {2};
\node [blank, right=of u2] (b2) {};
\node [unit, right=of b2] (u3) {3};

\node [blank, below=of u1] (b3) {};
\node [unit, right=of b3] (u4) {4};
\node [blank, right=of u4] (b4) {};
\node [unit, right=of b4] (u5) {5};
\node [blank, right=of u5] (b5) {};

\node [unit, below=of b3] (u6) {6};
\node [blank, right=of u6] (b6) {};
\node [unit, right=of b6] (u7) {7};
\node [blank, right=of u7] (b7) {};
\node [unit, right=of b7] (u8) {8};

\node [blank, below=of u6] (b8) {};
\node [unit, right=of b8] (u9) {9};
\node [blank, right=of u9] (b9) {};
\node [unit, right=of b9] (u10) {10};
\node [blank, right=of u10] (b10) {};

\node [consumer, below=of u9] (c1) {\small A};
\node [consumer, below=of u10] (c2) {\small B};

\path (u1) edge (u4)
    (u2) edge (u4)
    (u2) edge (u5)
    (u3) edge (u5)
    (u4) edge (u6)
    (u4) edge (u7)
    (u5) edge (u7)
    (u5) edge (u8)
    (u6) edge (u9)
    (u7) edge (u9)
    (u7) edge (u10)
    (u8) edge (u10)

    ($ (b1) + (-0.1, 0.14) $) edge ($ (u1) + (0.18, 0.14) $)
    ($ (b1) + (-0.1, -0.0) $) edge ($ (u1) + (0.18, -0.0) $)
    ($ (b1) + (-0.1, -0.14) $) edge ($ (u1) + (0.18, -0.14) $)
    ($ (b2) + (-0.1, 0.14) $) edge ($ (u2) + (0.18, 0.14) $)
    ($ (b2) + (-0.1, -0.0) $) edge ($ (u2) + (0.18, -0.0) $)
    ($ (b2) + (-0.1, -0.14) $) edge ($ (u2) + (0.18, -0.14) $)
    ($ (b2) + (0.1, 0.14) $) edge ($ (u3) + (-0.18, 0.14) $)
    ($ (b2) + (0.1, -0.0) $) edge ($ (u3) + (-0.18, -0.0) $)
    ($ (b2) + (0.1, -0.14) $) edge ($ (u3) + (-0.18, -0.14) $)
    (b4) edge (u4)
    (b5) edge (u5)
    ($ (b6) + (-0.1, 0.08) $) edge ($ (u6) + (0.18, 0.08) $)
    ($ (b6) + (-0.1, -0.08) $) edge ($ (u6) + (0.18, -0.08) $)
    (b7) edge (u7)
    ($ (b7) + (+0.1, 0.08) $) edge ($ (u8) + (-0.18, 0.08) $)
    ($ (b7) + (+0.1, -0.08) $) edge ($ (u8) + (-0.18, -0.08) $)
    (b9) edge (u9)
    (b10) edge (u10)

    (u9) edge ($ (c1) + (0.1, 0.25)$)
    (u9) edge ($ (c1) + (-0.1, 0.25)$)
    (u10) edge ($ (c2) + (0.1, 0.25)$)
    (u10) edge ($ (c2) + (-0.1, 0.25)$)
    ;

\end{tikzpicture}
  \caption{Layout of plant network.}
  \Description{Graph of inter-connected nodes. The nodes are numbered 1 through 10, plus two nodes labeled A and B. The numbered nodes all have three incoming arrows, and one or two outgoing arrows. All nodes are aligned in layers. Nodes 1, 2, and 3 appear in the top row. Each have three incoming arrows with no source nodes. Node 1 has an arrow from itself down to node 4. Similarly, node 2 has arrows to nodes 4 and 5. Node 4 has an arrow to node 5. The next row consists of node 4 and 5. Each have 1 additional incoming arrow with no source node. Node 4 has arrows to node 6 and 7. Node 5 has arrows to 7 and 8. The next row consists of nodes 6, 7, and 8. Node 6 and 8 have two additional incoming arrows with no source node, while note 7 only has one. Node 6 has an arrow to node 9. Node 7 has arrows to node 9 and 10. Node 8 has one arrow to node 10. The next row consists of nodes 9 and 10. Each have 1 incoming arrow with no source node. Node 9 has two outgoing arrows that both point to node A. Likewise, node 10 has two outgoing arrows that each point to node B. The last row consists of nodes A and B.}
  \label{fig:cp_layout}
\end{wrapfigure}
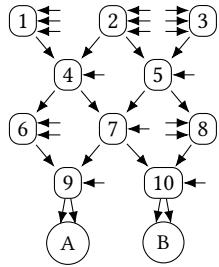
Figure~\ref{fig:cp_layout} shows the graph structure of the network.
Numbered nodes (1 to~10) denote controlled production units, while letter-labeled nodes (A, B) denote uncontrolled consumers with periodically varying demand. 
Arrows from source to target nodes denote potential flow at no incurred cost.
Arrows without a source node denote potential flow from external providers, at a cost that individually and periodically varies.
Consumption patterns \ifshowappendix(Figure~\ref{fig:consumerdemand}) \fi and examples of the cost patterns \ifshowappendix(Figure~\ref{fig:providercost}) \fi are shown in\refappendix{Appendix~\ref{sect:plant}}.
The flow rate in all arrows follows a uniform random distribution in the range $\interval{2.15}{3.15}$ $\ell$/s.

Each agent~$i$ is associated with a production unit (1 to~10), with internal storage volume~$v_i$.
Beside a global periodic timer, each agent can only observe its own volume.
At each decision period of 0.5 seconds, an agent can open or close each of the three input flows (i.e., there are $|\act_i| = 9$ actions per agent and hence $|\act| = 9^{10}$ global actions), but cannot prevent flow from outgoing connections.

The individual cost of an agent is incurred by buying from external providers. 
Agents must learn to take free material from other units, except for agents~1 to~3, which instead must learn to buy from their external providers periodically when the cost is low.

Units must not exceed their storage capacity, and units~9 to~10 must also not run empty to ensure the consumers' demand is met. That is, the safety property is $\phi = \{s \mid \bigwedge_i v_i < 50 \land 0 < v_9 \land 0 < v_{10}\}$.

\subsubsection{Shielding.}
The property~$0 < v_9$ cannot be enforced by a local shield for agent~$9$ without additional assumptions that the other agents do not run out.
This is because the (single) external provider is not enough to meet the potential (dual) demand of consumer~$A$.
This yields the local safety properties $\phi_i = \{s \mid 0 < v_i < 50\}$.
Here, agents~1 to~3 do not make assumptions, while agents~4 and~5 depend on agents~1 to~3 not running out, etc.
For this model, we do not use the same shield for all agents, since they differ in the number of outgoing flows (either 1 or~2).
Still, it is sufficient to compute two types of shields, one for each variant, and adapt them to analogous agents.
Computing a centralized shield would again be infeasible, while computing the distributed shield took less than 1 second.

\subsubsection{Comparing centralized and cascading learning}

\begin{figure}[t]
  \centering
  \includesvg[width=\linewidth]{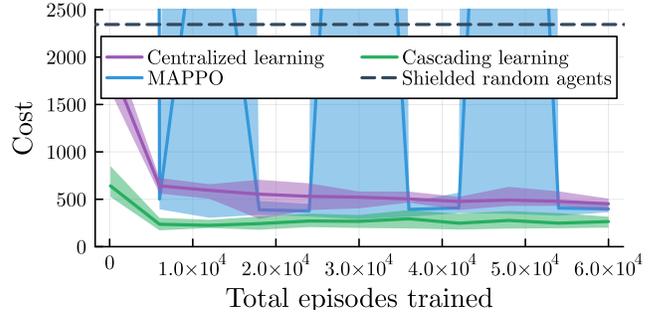}
  \caption{Comparison of different learning methods on the chemical production plant. The centralized policy was trained for the total episodes indicated, while these episodes were split evenly between each agent in the cascading case. }
  \Description{Graph showing cost as a function of training episodes. A dashed horizontal line shows the performance of shielded random agents as a baseline. Both the (shielded) centralized learning policy and the (shielded) cascading learning policy very quickly show improvement, before reaching a plateau at 10,000 episodes. The cascading learning policy has better cost than the centralized learning policy. Both policies perform much better than the shielded random agents. The MAPPO policy has 3 distinct spikes, where the maximum and mean costs rise far above that of the shielded random agents. These spikes appear at 10,000, 30,000 and 50,000 training episodes. In these 3 spikes, the minimum cost is unaffected. Beside the 3 spikes, the policy achieves a higher cost than the shielded cascading learning, but better than the shielded centralized learning.}
  \label{fig:cplearning}
\end{figure}

Thanks to the guarantees given by the  distributed shields, agents~9 to~10 are only affected by the behavior of the consumers, agents~6 to~8 only depend on agents~9 to~10, etc. Thus, the agent training order is 10, 9, 8 \dots

We compare the results of shielded cascading learning, shielded centralized learning, MAPPO, and shielded random agents in Figure~\ref{fig:cplearning}.
Centralized learning achieved a cost of~$292$.
The lowest cost overall, $172$, was achieved by cascading learning.
We compare this to the (unshielded) MAPPO agents, whose lowest cost was~$291$.
More background information is given in\refappendix{Appendix~\ref{sect:cpmapposfaety}}.

\section{Conclusion}\label{sect:conclusion}

In this paper, we presented distributed shielding as a scalable MA approach, which we made practically applicable by integrating assume-guarantee reasoning.
We also presented cascading shielded learning, which, when applicable, is a scalable MARL approach.
We demonstrated that distributed shield synthesis is highly scalable and that coming up with useful guarantees is reasonably simple.

While we focused on demonstrating the feasibility in this work by providing the guarantees manually, a natural future direction is to learn them.
As discussed, this is much simpler in the classical setting~\cite{DBLP:reference/mc/GiannakopoulouNP18} because the agents/components are fixed.
We believe that in our setting where both the guarantees and the agents are not given, a trial-and-error approach (e.g., a genetic algorithm) is a fruitful direction to explore.
Another relevant future direction is to generalize our approach to continuous systems~\cite{DBLP:conf/aisola/BrorholtJLLS23}.

\begin{acks}
This research was partly supported by the Independent Research Fund Denmark under reference number 10.46540/3120-00041B, DIREC - Digital Research Centre Denmark under reference number 9142-0001B, and the Villum Investigator Grant S4OS under reference number 37819.
\end{acks}

\bibliographystyle{ACM-Reference-Format}
\balance
\bibliography{bibliography}

\ifshowappendix

\clearpage
\appendix

\section{Appendix}

\subsection{Policy of the Environment-Controlled Car}\label{sect:frontcarspeed}
The environment-controlled front car decides between accelerations of respectively $-2$\,m/s$^2$, $0$\,m/s$^2$, or $2$\,m/s$^2$ through a random weighted draw. The weights that are used for the draw ($w_{-2}, w_0, w_2$) are influenced by the environment-controlled car's own velocity, $v_n$, in the following manner:

\begin{align*}
  w_{-2} &= \begin{cases}
    2 \text{ if  } v_n > 10\\
    1 \text{ otherwise}
  \end{cases}\\
  w_0 &= 1 \\
  w_{2} &= \begin{cases}
    2 \text{ if  } v_n < 0\\
    1 \text{ otherwise}
  \end{cases}
\end{align*}

\subsection{Demand and Cost Patterns of the Chemical Production Plant}\label{sect:plant}

\begin{figure}[H]
\begin{subfigure}{\linewidth}
  \centering
  \includesvg[width=\linewidth]{"Graphics/Consumer_Demand.svg"}
  \caption{Periodically varying demand by consumers.}
  \Description{Two graphs showing consumption as a function of time. Consumption changes once per second. Values given in this description are approximate. The first graph shows that the consumption from consumer A follows a cycle of 5, 5, 3, 0, then 0. The second graph shows that the consumption from consumer B has a cycle of 5, 3, 3, 3, then 0. }
  \label{fig:consumerdemand}
\end{subfigure}
\begin{subfigure}{\linewidth}
  \centering
  \includesvg[width=\linewidth]{"Graphics/Provider_Cost.svg"}
  \caption{Exemplary periodically varying cost of the providers for units~1 and~10.  When there are multiple providers to the same unit, they all have the same cost.}
  \Description{Two graphs showing cost as a function of time. Cost changes once per second. Values given in this description are approximate. The first graph shows that the cost from provider 1 is 0, 5, 3, 3, then 3. The second shows that the cost from provider 10 is 9, 0, 6, 6, then 6.}
  \label{fig:providercost}
\end{subfigure}
\caption{Patterns from the chemical production plant.}
\label{fig:patterns_demandcost}
\end{figure}

\subsection{Chemical Production Plant: MAPPO Safety}\label{sect:cpmapposfaety}

The agents controlling the chemical production units were penalized by an immediate cost of $25{,}600$ whenever they were in unsafe states.
We arrived at that penalty value by the same process as the car platoon example, i.e., starting from $100$ and doubling the penalty until the rate of safety started to diminish.
The severity of this penalty means that a single highly unsafe outlier can skew the mean performance massively, creating the spikes of bad performance seen in Figure~\ref{fig:cplearning}.

Figure~\ref{fig:cpmappopercentagesafe} shows the resulting fraction of safe runs, learned under this penalty.

\begin{figure}[H]
  \centering
  \includesvg[width=\linewidth]{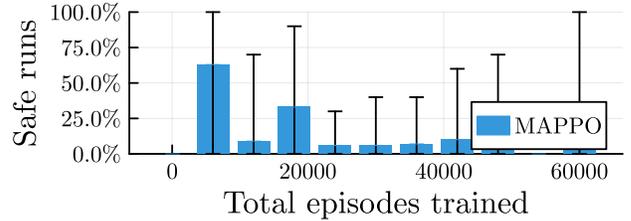}
  \caption{Percentage of safe runs with the MAPPO policy in the chemical production example. Blue bars show the mean of 10 repetitions, while black intervals give min and max values.}
  \Description{Bar chart. After 60,000 training episodes, the best training outcome achieves 100 percent safe runs. However, the worst outcome for the same number of episodes showed 0 percent, and the mean is less than 25 percent. The general level of safety is very low, with mean safety staying well below 25 percent.}
  \label{fig:cpmappopercentagesafe}
\end{figure}

\fi

\end{document}